\documentclass[preprint,11pt,authoryear]{elsarticle}

\usepackage{mathtools}
\usepackage{booktabs} 
\usepackage{amsmath,tikz,pgfplots}
\usetikzlibrary{decorations.text,calc,arrows.meta}
\usetikzlibrary{calc,intersections}

\usepackage{bm}
\usepackage{varwidth,url}              
\setlength\fboxrule{0.9pt}             
\setlength\fboxsep{0.3cm}              
\usepackage{algorithmicx,caption}      
\usepackage{algorithm}                 
\usepackage[noend]{algpseudocode}      
\makeatletter                          
\def\BState{\State\hskip-\ALG@thistlm} 
\makeatother                           

\makeatletter
\newcommand*\bigcdot{\mathpalette\bigcdot@{1.1}}
\newcommand*\bigcdot@[2]{\mathbin{\vcenter{\hbox{\scalebox{#2}{$\m@th#1\bullet$}}}}}
\makeatother

\usepackage{bm}
\usepackage{mathrsfs}

\usepackage{amsthm}

\newcommand{\new}[1]{{{#1}}}

\usepackage{calligra,amssymb}
\usepackage{amsmath,amsfonts}

\theoremstyle{definition}
 \newtheorem{theorem}{Theorem}
 \newtheorem{proposition}[theorem]{Proposition}   
 \newtheorem{lemma}[theorem]{Lemma}   
 \newtheorem{coroll}[theorem]{Corollary}   
 \newtheorem{defn}[theorem]{Definition}   
 \newtheorem{example}[theorem]{Example}   

\newtheorem{remark}[theorem]{Remark}

\definecolor{DarkBlue}{rgb}{0,0.1,0.55}

\numberwithin{equation}{section}




\newcommand{\zarl}{\text{\calligra l}\,}

\newcommand{\hide}[1]{}
\newcommand{\zar}[1]{\text{Zar}(#1)}
\newcommand{\zz}[1]{\mathscr{L}(#1)}


\newcommand {\junk}[1]{}


\newcommand {\Z}  {\mathbb{Z}}
 \newcommand {\N}         {\mathbb{N}}

\newcommand {\ZZ} {{\rm Z}}
\newcommand {\RR} {{\mathcal R}}

\newcommand {\la}   {{\langle}}
\newcommand {\ra}   {{\rangle}}
\newcommand {\eps} {{\varepsilon}}
\newcommand {\ext} {{\mathrm{ext}}}

\newcommand {\Id} {\mbox{\rm Id}}


\newcommand{\rank}{\mathrm{rank}}

\def\addots{\mathinner{\mkern1mu
\raise1pt\vbox{\kern7pt\hbox{.}}
\mkern2mu\raise4pt\hbox{.}\mkern2mu
\raise7pt\hbox{.}\mkern1mu}}

\def\bmx{{{x}}}

\newcommand{\simone}[1]{{#1}}


\newcommand{\calV}{\mathcal{V}}

\def\ZZ{{\mathbb{Z}}}
\def\QQ{{\mathbb{Q}}}
\def\RR{{\mathbb{R}}}
\def\CC{{\mathbb{C}}}

\def\SS{{\mathbb{S}}}
\def\Id{{\mathbb{I}}}

\def\spec{\mathscr{S}}

\newcommand{\zeroset}[1]{{Z(#1)}}

\def\setD{{\mathcal{D}}}

\def\zarB{{\mathscr{B}}}

\def\rank{{\rm rank}\,}
\def\deg{{\rm deg}}

\def\mymid{{\,\,:\,\,}}




\usepackage{amssymb}


\journal{Journal of Symbolic Computation}

\begin{document}

\begin{frontmatter}



\title{{\bf Exact algorithms for semidefinite programs with degenerate feasible set}}

\author{Didier Henrion}
\address{
LAAS-CNRS, Universit\'e de Toulouse, France,
Faculty of Electrical Engineering, Czech Technical Universy in Prague, Czechia.
{henrion@laas.fr}
}

\author{Simone Naldi}
\address{
{Univ. Limoges, XLIM, UMR 7252}
{F-87000 Limoges}
{France} {simone.naldi@unilim.fr}
}

\author{Mohab Safey El Din}
\address{
  {Sorbonne Universit\'e, \textsc{CNRS},\\
    Laboratoire d'Informatique de Paris~6, \textsc{LIP6},
    \'Equipe \textsc{PolSys}}
  {4 place Jussieu}
  {F-75252, Paris Cedex 05} 
  {France}. {mohab.safey@lip6.fr}
}


  
\begin{abstract}
Given symmetric matrices $A_0, A_1, \ldots, A_n$ of size $m$ with rational entries, the set of real vectors $x=(x_1,\ldots,x_n)$ such that the matrix $A_0+x_1 A_1+\cdots+x_n A_n$ has non-negative eigenvalues is called a spectrahedron. Minimization of linear functions over spectrahedra is called semidefinite programming. Such problems appear frequently
  in control theory and real algebra, especially in the context of nonnegativity certificates for
  multivariate polynomials based on sums of squares.

  Numerical software for semidefinite programming are mostly based on
  interior point methods, assuming non-degeneracy properties such
  as the existence of an interior point in the spectrahedron. In this
  paper, we design an exact algorithm based on symbolic homotopy for
  solving semidefinite programs without assumptions on the feasible
  set, and we analyze its complexity. Because of the exactness of the
  output, it cannot compete with numerical routines in practice. However, we
  prove that solving such problems can be done in polynomial time
  if either $n$ or $m$ is fixed.
\end{abstract}

\begin{keyword}
{Semidefinite programming, Polynomial optimization, Exact computation, Homotopy}
\end{keyword}

\linespread{1}

\end{frontmatter}

\section{Introduction}\label{sec:intro}

Let $A_0, A_1, \ldots,A_n$ be symmetric matrices of size $m$ with entries in the field $\QQ$ of
rational numbers. The goal of this article is to design algorithms for
solving the semidefinite programming (SDP) problem
\begin{equation}
  \label{originalSDP}
  \inf \ell(x) \,\,\,\, \text{ s.t. } \, x \in \spec(A)
\end{equation}
where $\ell(x) = \ell_1x_1 + \cdots + \ell_n x_n$ is a linear function of a vector $x=(x_1, \ldots, x_n)$ of variables,
and $\spec(A)$ is the solution set in $\RR^n$ of the linear matrix
inequality (LMI)
\begin{equation}
  \label{originalLMI}
A(x) \coloneqq A_0+x_1A_1+\cdots+x_n A_n \succeq 0.
\end{equation}
The above constraint means that
$A(x)$ is positive semidefinite, that is, that all its eigenvalues are
non-negative. The set $\spec(A)$, called {\em spectrahedron}, is
a convex and basic semi-algebraic set, an affine section of the cone of
positive semidefinite matrices.

LMIs and SDP appear frequently in applications, {\it e.g.} for stability queries in control theory \cite{boyd1994linear}. They also appear as a central object in convex algebraic geometry
and real algebra for computing certificates of non-negativity based on sums of squares
\cite{lasserreMoments,blekherman2013semidefinite}.

Even though SDP can be solved in polynomial time to a fixed accuracy via the ellipsoid algorithm, the complexity status of this problem in the Turing or in the real numbers model is still an open question in computer science \cite{ramana1997exact,allamigeon2018solving}. On the other hand, very few algebraic methods that can represent an alternative to classical approaches from optimization theory have been developed for SDP.

In this paper, we aim at designing a symbolic algorithm for solving
the SDP problem \eqref{originalSDP}, without any assumption on the feasible
set $\spec(A)$, but with genericity assumptions on the objective
function $\ell$. Our algorithm returns an exact algebraic representation of an optimal solution.

\subsection{State of the art}
\label{ssec:stateart}

Numerical methods have been developed for solving SDP problems, the most efficient of which are based
on interior point methods (IPM) \cite{nestnem}. The core algorithmic component is Newton's method for following an algebraic primal-dual curve called {\it central
path}, whose points $(x_\mu,y_\mu)$ are real solutions to the quadratic semi-algebraic problems
\begin{equation}\label{centralpath}
A(x)Y(y) = \mu\,\Id_m \,\,\,\,\,\,\,\,\,\, A(x) \succeq 0 \,\,\,\,\,\,\,\,\,\, Y(y) \succeq 0.
\end{equation}
In the above problem, $Y(y)$ is a square matrix lying in a space of matrices dual to that of $A(x)$. For small but positive $\mu$, when the LMI has strictly feasible solutions, the points $x_\mu$ lie in the interior of $\spec(A)$. They converge to a boundary point for $\mu \rightarrow 0^+$. Logarithmic barrier functions are available for SDP, and their regularity properties are used to control the complexity of Newton's method  for solving \eqref{originalSDP} when $\spec(A)$ has interior points.

Let us remark that in several situations $\spec(A)$ has empty interior, for instance when $\spec(A)$ consists of sums-of-squares certificates of a polynomial with rational coefficients that does not admit rational certificates, see \cite{scheidRational} for a class of such examples. It is worth to recall that in the absence of interior points, variants of the IPM have been designed, such as infeasible-start IPMs (\cite{kojima1998local}), most of which are based on homotopic deformations. In this work, we use homotopy to deal with degenerate SDP in an exact setting.

The output of our algorithm is an exact representation of a feasible solution, whenever it exists, through rational univariate representation: the entries of the solution vector are rational functions of an algebraic number which is encoded by its minimum polynomial and by a separating interval. Let us also mention that an intermediate step of the algorithm consists of the computation of a similar representation (but bivariate) of a homotopy curve that contains such solution.

Even in non-degenerate situations (existence of an interior point), it is hard to estimate the degree of the central path (that represents a complexity measure for path-following methods) in practical situations and explicit examples of central paths with exponential curvature have been computed, see \cite{allamigeon2018log}. For these examples, the convergence rate of the IPM is slower than for the generic case. In contrast, we are able to give explicit upper bounds for the degree of the algebraic curve encoded by the output of our algorithm.

The several existing variants of the IPM are implemented in software running in finite precision, e.g. SeDuMi (\cite{SeDuMi}), SDPT3 (\cite{toh1999sdpt3}) and MOSEK (\cite{andersen2000mosek}). The expected running time is essentially polynomial in $n, m, \log(\eta^{-1})$ (where $\eta$ is the precision) and in the bit-length of the input \cite[Ch.1, Sec.1.4]{anjos2012introduction}. Whereas these numerical routines run quite efficiently on large instances, they may fail on degenerate situations, even on medium or small size problems. This has motivated for instance the development of floating point libraries for SDP working in extended precision, see \cite{joldes2017implementation}.

Symbolic computation has been used in the context of SDP to tackle several related problems. First, it should be observed that $\spec(A)$ is a semialgebraic set in $\RR^n$ defined by sign conditions on the coefficients of the characteristic polynomial $t \mapsto det(t\,\Id_m-A(x))$. Hence, classical real root finding algorithms for semialgebraic sets such as \cite{BPR98, BaPoRo06,Safey07,BGHS14} can be used for exact SDP. These algorithms solve SDP in time $m^{O(n)}$. Algorithms for solving diophantine problems on LMI have been developed in \cite{QSZ13} and \cite{SZ10}. The algorithm in \cite{porkolab1997complexity} can solve general semidefinite systems in simply-exponential time, which reduces to polynomial time when the matrix size or the dimension is fixed; the drawback is that the complexity is based on Renegar's quantifier elimination, and hence the constant hidden in the bound is too large for these algorithms to be used in practice.

More recently, algorithms have been designed for solving exactly {\em generic} LMI in \cite{HNS2015c, spectra} and {\em generic} rank-constrained SDP in \cite{naldiIssac2016integral}, with runtime polynomial in $n$ (the number of variables, or equivalently the dimension of the affine section defining $\spec(A)$) if $m$ (the size of the matrix) is fixed. Because of the high degrees needed to encode the output ({\it cf.} \cite{stu}), they cannot in general compete with numerical software. On small size problems they may however offer a nice complement to these techniques in situations where numerical issues are encountered. In both cases, {genericity} assumptions on the input are required. This means that for some special problems (lying in some Zariski closed subset of the space spanned by the entries of matrices $A_i$), these algorithms cannot be applied.

\subsection{Outline of the main contributions}
In this paper, we remove the genericity assumptions on the spectrahedron that were required in our previous work \cite{HNS2015c}, and we show that optimization of generic linear functions over spectrahedra can be performed without significant extra cost from the complexity viewpoint.

Our precise contributions are as follows:
\begin{itemize}
\item we design an algorithm for solving the SDP in \eqref{originalSDP} without any assumption on the defining matrix $A(x)$, but with genericity assumptions on the objective function;
\item we prove that the number of arithmetic operations performed by this algorith is polynomial in $n$ when $m$ is fixed, and viceversa;
\item we report on examples showing the behaviour of the algorithm on small-size but degenerate instances.
\end{itemize}
The main tool is the construction of a homotopy acting on the matrix representation $A(x)$ rather than on the classical complementarity conditions as in \eqref{centralpath}. This allows us to preserve the LMI structure along the perturbation. Let us remark that a genericity assumption on the objective function implies that the dual feasible set is a generic translation of a possibly degenerate spectrahedron.

We use techniques from real algebraic geometry similar to those in \cite{HNS2015c}, based on transversality theory \cite{demazure2013bifurcations}, to prove genericity properties of the perturbed systems. We also investigate closedness properties of linear maps restricted to semi-algebraic sets in a more general setting in Section \ref{sec:prelim}, generalizing similar statements for real algebraic sets in \cite{SaSc03, HNS2014}.

\simone{This work is an extended version of the paper \cite{issac18HNS}, published in the Proceedings of the International Symposium on Symbolic and Algebraic Computation (ISSAC 2018). With respect to this former version, we have clarified some aspects concerning the properness of the restriction of linear functions to spectrahedra and their homotopic perturbations (essentially contained in Lemma \ref{lemma:3}); moreover we discuss in Section \ref{sec:examples} an application of our method to the solution of primal-dual SDP problems with positive duality gap.}

\subsection{General notation}
For a matrix of polynomials $f \in \RR[x]^{s \times t}$ in $x=(x_1,\ldots\allowbreak,x_n)$, we denote by $\zeroset{f}:=\{x \in {\mathbb C}^n : f(x) = 0\}$ the complex algebraic set defined by the zero locus of $f$. If $f \in \RR[x]^{s}$, the Jacobian matrix of $f$ is denoted by $Df \coloneqq \left({\partial f_i}/{\partial x_j}\right)_{ij}$. A set $S \subset \RR^n$ defined by sign conditions on a finite list of polynomials is called a basic semi-algebraic set, and a finite union of such sets is called a semi-algebraic set.

Let $\SS^m(\QQ)$ denote the space of symmetric matrices  of size $m$ with entries in $\QQ$, and let $\SS_+^m(\QQ)$ denote the cone of positive semidefinite matrices in $\SS^m(\QQ)$. Let $A(x) := A_0+\sum_{i=1}^n x_i A_i$, with $A_i \in \SS^m(\QQ)$. One can associate to $A$ the hierarchy of algebraic sets
$$
\setD_r(A) := \{x \in \RR^n \mymid \rank A(x) \leq r\}, \,\,\,\, r=1,\ldots,m-1
$$
defined by minors of $A(x)$ of a fixed size. The set $\setD_r$ is called a determinantal variety. We recall the definition of incidence variety in the context of semidefinite programming, introduced in \cite{HNS2015c}. For $r \in \{1,\ldots,m-1\}$, let $Y = Y(y)$ be a $m \times (m-r)$ matrix of unknowns $y_{i,j}$. Let $\iota\subset\{1,\ldots,m\}$ be a subset of cardinality $m-r$, and $Y_\iota$ the submatrix of $Y$ corresponding to rows indexed by $\iota$. The {\it incidence variety} for $\setD_r(A)$ is the algebraic set
$$
\calV_{r,\iota}(A) := \{ (x,y) \in \CC^n \times \CC^{m(m-r)} \mymid A(x)Y(y)=0, Y_\iota = \Id_{m-r} \}.
$$

Let $B \in \SS^m(\QQ)$ and $\varepsilon \in [0,1]$. In this paper, we consider a 1-parameter family of linear matrices
  $$
  A(x)+\varepsilon B = (A_0+\varepsilon B)+\sum_i x_i A_i
  $$
  perturbing $A(x)$ in direction $B$.

  \section{Preliminaries}\label{sec:prelim}

In this section, we prove some results of topological nature on
spectrahedra and their deformations. Before doing that, we need to
recall basics about infinitesimals and Puiseux series rings. More
details can be found in \cite{BaPoRo06}.

An infinitesimal $\eps$ is a positive element which is transcendental
over $\RR$ and smaller than any positive real number. The field of
Puiseux series
$$
\RR\langle \eps\rangle:=\left\{\sum_{i\geq i_0}a_i \eps^{i/q} \: :\:  i_0\in \Z, q \in \N\setminus\{0\}, a_i \in \RR\right\}
$$
is real closed, \cite[Ex.1.2.3]{BoCoRo98}. An element $z = \sum_{i\geq i_0}a_i \eps^{i/q}$ is bounded over $\RR$ if $i_0 \geq 0$. In that case, one says that its limit when $\eps$ tends to $0$ is $a_0$ and we write it $\lim_\eps z$. The operator $\lim_\eps \colon \RR\langle \eps\rangle \to \RR$ is a ring homomorphism, and we extend it over $\RR\langle\eps\rangle^n$ coordinatewise. Also given a subset $Q \subset \RR\langle\eps\rangle^n$, we denote by $\lim_\eps Q$ the subset of $\RR^n$ of points which are the images by $\lim_\eps$ of bounded elements in $Q$.

Given a semialgebraic set $S\subset \RR^n$ defined by a
semialgebraic formula with coefficients in $\RR$, 
$\ext(S, \RR\langle\eps\rangle)$ denotes the solution set of that formula in
$\RR\langle\eps\rangle^n$. Incidentally, we often abuse of notation, letting $S$ denote
both solutions sets.

For a linear pencil $A(x) = A_0+x_1 A_1+\cdots+x_nA_n$ of $m\times m $
symmetric linear matrices and a $m \times m$ positive definite matrix
$B$, we consider the spectrahedron $\spec(A+\eps B)$ in
$\RR\langle \eps\rangle^n$.

Our first result relates
$\spec(A)\subset \RR^n$ with
$\spec(A+\eps B)\subset \RR\langle\eps\rangle^n$.

\begin{lemma}\label{lemma:1}
  Using the above notation, $\spec(A)$ is included in (the interior of)
  $\spec(A+\eps B)$.
\end{lemma}

\begin{proof}
  \new{If $\spec(A)= \emptyset$, there is nothing to prove.}
  Let $\bmx^* \in \spec(A)$. By definition of positive semide\-fi\-ni\-te\-ness, for
  any vector ${v}\in \RR^m$, ${v}^t A(\bmx^*){v} \geq 0$. Since
  $\eps$ is a positive infinitesimal and $B$ is positive definite, we
  deduce that for any vector ${v}\in \RR^m\setminus \{0\}$,
  $0<{v}^t A(\bmx^*){v}+ {v}^t \eps B{v} = {v}^t(A(\bmx^*)+\eps B){v}$.
  We deduce that $A+\eps B$ is positive definite at $\bmx^*$, hence
  $\bmx^*$ is in (the interior of) $\spec(A+\eps B)$, as requested.
\end{proof}

Further, we identify the set of linear forms $\ell := \ell_1 x_1 +\cdots +\ell_n x_n$ with $\CC^n$, the linear form $\ell$ being identified to the point $\ell_1, \ldots, \ell_n$. By a slight abuse of notation we also denote by $\ell$ the map $\bmx \mapsto \ell(\bmx)$.

\begin{lemma}\label{lemma:sas}
  Let $\mathbf{R}$ be a real closed field, $\mathbf{C}$ be an
  algebraic closure of $\mathbf{R}$ and $S\subset \mathbf{R}^n$ be a
  closed semialgebraic set. There exists a non-empty Zariski open set
  $\zz{S}\subset \mathbf{C}^n$ such that for
  $\ell\in \zz{S}\cap\mathbf{R}^n$, $\ell(S)$ is closed for the
  Euclidean topology.
\end{lemma}

\begin{proof}
  Our proof is by induction on the dimension of $S$. When $S$ has
  dimension $0$, the statement is immediate. 

  We let now $d \in \N\setminus\{0\}$, assume that the statement holds for
  semialgebraic sets of dimension less than $d$ and that $S$ has
  dimension $d$. By \cite[Th.2.3.6]{BoCoRo98}, it can be partitioned as a
  finite union of closed semialgebraically connected semialgebraic
  manifolds $S_1, \ldots, S_N$. Note that each $S_i$ is semialgebraic.
  We establish below that there exist
  non-empty Zariski open sets $\zz{S_i}\subset \mathbf{C}^n$ such
  that for $\ell\in \zz{S_i}\cap \mathbf{R}^n$, $\ell(S_i)$ is closed
  for the Euclidean topology. Taking the intersections of those
  finitely many non-empty Zariski open set is then enough to define
  $\zz{S}$.

  Let $1\leq i \leq N$. If the dimension of $S_i$ is less than $d$, we
  apply the induction assumption and we are done. Assume now that
  $S_i$ has dimension $d$. Let $V\subset \mathbf{C}^n$ be the Zariski
  closure of $S_i$ and $C$ be the semialgebraically connected
  component of $V\cap \mathbf{R}^n$ which contains $S_i$.  By
  \cite[Prop.17]{HNS2014}, there exists a non-empty Zariski open
  set $\Lambda_{1, i}\subset \mathbf{C}^n$ such that for
  $\ell\in \Lambda_{1, i}\cap \mathbf{R}^n$, $\ell(C)$ is closed.

  By definition of $C$ and using \cite[Ch.2.8]{BoCoRo98}, $C$ has
  dimension $d$, as $S_i$. We denote by $T_i\subset \mathbf{R}^n$ the
  boundary of $S_i$. Observe that it is a closed semialgebraic set of
  dimension less than $d$ \cite[Ch.2.8]{BoCoRo98}. Using the induction
  assumption, we deduce that there exists a non-empty Zariski open set
  $\Lambda_{2, i}\subset \mathbf{C}^n$ such that for
  $\ell\in \Lambda_{2, i}\cap \mathbf{R}^n$, $\ell(T_i)$ is closed.
  We claim that one can define $\zz{S_i}$ as the intersection
  $\Lambda_{1, i}\cap \Lambda_{2, i}$, i.e. for
  $\ell \in \zz{S_i}\cap \mathbf{R}^n$, $\ell(S_i)$ is closed.

  Indeed, assume that the boundary of $\ell(S_i)$ is not empty
  (otherwise there is nothing to prove) and take $a$ in this
  boundary. Without loss of generality, assume also that for all
  $\bmx\in S_i$, $\ell(\bmx)\geq a$. We need to prove that
  $a \in S_i$.

  Assume first that for all $\eta>0$, $\ell^{-1}([a, a+\eta])$ has a
  non-empty intersection with $T_i$. Since $\ell(T_i)$ is closed by
  construction, we deduce that there exists $\bmx\in T_i$ such that
  $\ell(\bmx)=a$. Since $S_i$ is closed by construction and $T_i$ is
  its boundary, we deduce that $\bmx\in S_i$ and then that
  $a \in \ell(S_i)$.

  Assume now that for some $\eta>0$, $\ell^{-1}([a, a+\eta])$ has an
  empty intersection with $T_i$. Then, we deduce that
  $\ell^{-1}([a, a+\eta])\cap S_i = \ell^{-1}([a, a+\eta])\cap C$.
  Besides, since $\ell(C)$ is closed, there exists $\bmx\in C$ such
  that $\ell(\bmx)=a$. Because,
  $\ell^{-1}([a, a+\eta])\cap S_i = \ell^{-1}([a, a+\eta])\cap C$, we
  deduce that $\bmx\in S_i$ which ends the proof.
\end{proof}

\begin{coroll}\label{lemma:2}
  Let $A(x)$ be as above.
  Then there is a non-empty Zariski open set $\zarl_1\subset \CC^n$
  such that for $\ell\in \zarl_1\cap \RR^n$, $\ell(\spec(A))$ is closed for the Euclidean
  topology of $\RR^n$.
\end{coroll}
\begin{proof}
  Since $\spec(A) \subset \RR^n$ is a closed semialgebraic
  set in $\RR^n$, one can directly apply Lemma~\ref{lemma:sas} to define the
  non-empty Zariski open set $\zarl_1 \subset \CC^n$ satisfying the claimed
  property.
\end{proof}

\simone{
\begin{defn}
  Let $M,N \subset \RR\langle \eps\rangle^n$. We say that $M$ is $\eps-$near
  to $N$ if $\forall \, y \in M$ there exists $x \in N$ such that $\lim_\eps (x-y)=0$.
\end{defn}
In the following, we often abuse of notation concerning the extension of semialgebraic sets from
$\RR^n$ to $\RR\la\eps\ra^n$ and of semialgebraic functions defined over the two fields.
}

\begin{lemma}\label{lemma:3}
 \simone{Let $\ell$ be such that $\ell(\spec(A))$ is closed.}
  \begin{enumerate}
  \item[1.]
    \simone{
    For $B \succ 0$, $\spec(A+\eps B)$ is $\eps-$near to $\spec(A)$ and $\ell(\spec(A+\eps B))$ is $\eps-$near to $\ell(\spec(A))$.
    }
  \item[2.]
    Let $B \succ 0$. Assume that there exists $\bmx^*\in \spec(A)$ such that $\ell(\bmx^*)$
    lies on the boundary of $\ell(\spec(A))$. Then, there exists $\bmx^*_\eps
    \in \spec(A+\eps B)$ such that $\ell(\bmx^*_\eps)$ lies on the boundary
    of $\ell(\spec(A+\eps B))$ and $\lim_\eps \bmx^*_\eps = \bmx^*$.
    Conversely, if $\bmx^*_\eps \in \spec(A+\eps B)$ lies on the
    boundary of $\ell(\spec(A+\eps B))$, $\spec(A) \neq \emptyset$, and $\lim_\eps \bmx^*_\eps$
    exists, then $\ell(\lim_\eps \bmx^*_\eps)$ lies on the boundary of $\ell(\spec(A))$.
  \item[3.]
    \simone{
      If the restriction of $\ell$ to $\spec(A)$ is proper, and if $B \succ 0$, then the restriction of $\ext(\ell,\RR\la\eps\ra^n)$ to $\spec(A+\eps B)$ is proper.
    }
  \end{enumerate}
\end{lemma}
\begin{proof}
    \simone{ {\it (1)}.
      For $r\in \RR$ positive, let $B(x, r)$ be the ball of radius $r$
      centered at $x$.
      First, we prove that $\spec(A+\eps B)$ is $\eps-$near to $\spec(A)$.
      This is true if and only if
      $$
      \forall r>0 \, \forall x_\eps \in \spec(A+\eps B), B(x_\eps,r) \cap \spec(A) \neq \emptyset.
      $$
      By the Tarski-Seidenberg Principle (see \cite[Lem. 3.2.1]{BaPoRo06}) this is equivalent
      to the formula
      $$
      \forall r>0 \, \exists e' \in \RR_+ \, \forall 0 < e < e' \, \forall x_e \in \spec(A+e B), B(x_e,r) \cap \spec(A) \neq \emptyset
    $$
    defined over $\RR$. The latter formula is true due to the continuity of the eigenvalue functions.
    Next, since $\ell$ is a continuous function, defined over $\RR$, one deduces that $\ell(\spec(A+\eps B))$ is $\eps-$near to $\ell(\spec(A))$.
  }

  {\it (2)}.
  Recall that $\spec(A)$ is contained in $\spec(A+\eps B)$
  (Lemma~\ref{lemma:1}) and by point 1 that that $\spec(A+\eps B)$ is
  $\eps-$near to $\spec(A)$. 
  This implies that there exists $\rho_\eps$ on the boundary of
  $\ell(\spec(A+\eps B) \cap B(\bmx^*, r))$ and which is
  $\eps-$near to $\ell(\bmx^*)$.
  Since $\spec(A+\eps B) \cap B(\bmx^*,r)$ is closed and
  bounded, $\ell(\spec(A+\eps B) \allowbreak\cap \allowbreak B(\bmx^*, r))$ is closed
  for the Euclidean topology. Then, there exists
  $\bmx^*_\eps\in \spec(A+\eps B)\cap B(\bmx^*,r)$ such
  that $\ell(\bmx^*_\eps) = \rho_\eps$. Since this is true for any
  $r\in \RR$ positive, we deduce the equality $\lim_\eps \bmx^*_\eps = \bmx^*$.

  {Conversely, suppose that $\bmx^*_\eps \in \spec(A+\eps B)$  is such
    that $\ell(\bmx^*_\eps)$ lies
    on the boundary of $\ell(\spec(A+\eps B))$. Hence $\ell(\bmx^*_\eps)$
    minimizes $\ell$ on $\spec(A+\eps B)$. Let $y \in \spec(A)$.
    From Lemma \ref{lemma:1}, we know that $y \in \spec(A+\eps B)$.
    Since orders are preserved under limit, by the continuity of
    $\ell$, and applying \cite[Lem. 3.2.1]{BaPoRo06}, we get that
    $$
    \ell(x^*) = \ell(\lim_\eps\bmx^*_\eps) = \lim_\eps \ell(\bmx^*_\eps) \leq \lim_\eps \ell(y) = \ell(y).
    $$
    Since $y$ was arbitrary, we deduce that $x^*$ minimizes $\ell$ on $\spec(A)$,
    hence $\ell(x^*)$ lies on the boundary of $\ell(\spec(A))$.}

  \simone{ {\it (3)}.
    Denote by $\ell_1$ the restriction of $\ell$ to $\spec(A)$, and by $\ell_2$ the restriction of $\ext(\ell,\RR\la\eps\ra^n)$ to $\spec(A+\eps B)$. Let $U \subset \overline{\ell_2(\spec(A+\eps B))}$ be closed and bounded in $\RR\la\eps\ra^n$, and let $V = U \cap {\ell_1(\spec(A))}$. Since $\ell_1$ is proper, then $\ell_1(\spec(A))$ is closed, hence $V$ is closed and bounded in $\RR^n$, hence compact. By the properness of $\ell_1$, we deduce that $\ell_1^{-1}(V) = \ell_1^{-1}(U \cap {\ell_1(\spec(A))}) = \ell_1^{-1}(U) \cap \ell_1^{-1}({\ell_1(\spec(A))}) = \ell_1^{-1}(U) \cap \spec(A) = \ell_1^{-1}(U)$ is compact, hence closed and bounded. Remark that the same argument used in the proof of point 1 implies that $\ell_2^{-1}(U)$ is $\eps-$near to $\ell_1^{-1}(U)$, hence that it is bounded. Moreover, by the continuity of $\ell_2$, we deduce that $\ell_2^{-1}(U)$ is closed. We conclude that $\ell_2$ is proper.
    }
\end{proof}

\simone{
  \begin{remark}
    Note that in case $\spec(A)$ is compact, the properness assumption made in Lemma \ref{lemma:3} is trivially satisfied.
  \end{remark}
}


\section{Homotopy for semidefinite systems}
\label{sec:homotopy}
We consider the original linear matrix inequality $A(x) \succeq 0$ and its solution set $\spec(A)$. In this section, we prove that regularity properties can be assumed after the deformation of $\spec(A)$ described in the previous sections.

\subsection{Regularity of perturbed incidence varieties}\label{sec:regularity}
Let $B \in \SS^m(\QQ)$ and $\varepsilon \in [0,1]$.
We say that $A+\varepsilon B$ is {\it regular} if, for every $r=1,\ldots,m$ and $\iota \subset \{1,\ldots,m\}$ with $\sharp\iota=m-r$, the algebraic set $\calV_{r,\iota}(A+\varepsilon B)$ is smooth and equidimensional, of co-dimension $m(m-r)+\binom{m-r+1}{2}$ in $\CC^{n+m(m-r)}$.

The following proposition states that such a property holds almost everywhere if the perturbation follows a generic direction.

\begin{proposition}\label{prop:smoothness}
  There exists a non-empty Zariski open set $\zarB_1 \subset \SS^m(\CC)$ such that, for all $r \in \{0,\ldots,m\}$, $\iota\subset\{1,\ldots,m\}$ with $\sharp\iota=m-r$, and for $B \in \zarB_1 \cap \SS^m(\QQ)$, the following holds. For every $\varepsilon \in (0,1]$, out of a finite set, the matrix $A+\varepsilon B$ is regular.
  \end{proposition}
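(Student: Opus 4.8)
The plan is to run the parametric transversality argument of \cite{HNS2015c}, but letting only the \emph{constant term} of the pencil vary, over the whole space $\SS_m(\CC)$ of symmetric matrices; then one deduces by generic smoothness (valid in characteristic zero, cf.\ \cite{demazure2013bifurcations}) that a generic constant term yields regular incidence varieties, and transfers this to the family $A+\varepsilon B$ by a line argument in parameter space. So, fix $r\in\{1,\ldots,m-1\}$ and $\iota\subset\{1,\ldots,m\}$ with $\sharp\iota=m-r$ (the cases $r\in\{0,m\}$ are immediate, the relevant $\calV_{r,\iota}$ being then always an affine subspace, resp.\ the full ambient space $\CC^{n}$, of the claimed codimension once the constant term is generic). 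Using the constraint $Y_\iota=\Id_{m-r}$ to identify the free entries of $Y$ with a point $y\in\CC^{r(m-r)}$ — so that $Y(y)$ has full column rank $m-r$ for all $y$ — I would introduce the \emph{augmented incidence variety} $\setW_{r,\iota}\coloneqq\Psi_{r,\iota}^{-1}(0)\subset\CC^n\times\CC^{r(m-r)}\times\SS_m(\CC)$, where
$$
\Psi_{r,\iota}(x,y,D)\coloneqq\bigl(A_0+D+\textstyle\sum_{i=1}^n x_i A_i\bigr)\,Y(y)\in\CC^{m\times(m-r)} .
$$

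\textbf{Main step: $\setW_{r,\iota}$ is smooth and equidimensional of codimension $r(m-r)+\binom{m-r+1}{2}$.} Let $\kappa=\{1,\ldots,m\}\setminus\iota$ and write $A'_D(x)=A_0+D+\sum_i x_i A_i$ in $2\times2$ block form along the partition $\{1,\ldots,m\}=\iota\sqcup\kappa$, with symmetric diagonal blocks. Using $Y_\iota=\Id_{m-r}$, the matrix equation $A'_D(x)\,Y(y)=0$ is equivalent to: (a) the vanishing of the $\kappa$-block of $A'_D(x)Y(y)$ — $r(m-r)$ scalar equations; and (b) the vanishing of the \emph{symmetric part} of its $\iota$-block — $\binom{m-r+1}{2}$ scalar equations — since, whenever (a) holds, that $\iota$-block equals $E-Y_\kappa^{t}GY_\kappa$ (with $E,G$ the symmetric diagonal blocks of $A'_D(x)$ and $Y_\kappa$ the free part of $Y$), hence is automatically symmetric. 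So $\setW_{r,\iota}$ is cut out by $r(m-r)+\binom{m-r+1}{2}$ equations, and I would verify that their differentials are everywhere linearly independent by varying $D$ only: the Jacobian with respect to the $\iota\times\kappa$ and the (symmetric) $\iota\times\iota$ blocks of $D$ is block-triangular with identity diagonal blocks, since the $\kappa$-block of $A'_D(x)Y(y)$ does not involve the $\iota\times\iota$ block of $D$. Hence $\setW_{r,\iota}$ is smooth and equidimensional of the stated codimension — this is exactly the transversality computation of \cite{HNS2015c}, carried out here with the constant term as the only parameter.

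\textbf{Conclusion.} Applying generic smoothness to $\pi_{r,\iota}\colon\setW_{r,\iota}\to\SS_m(\CC)$, $(x,y,D)\mapsto D$ — componentwise, using that $\setW_{r,\iota}$ is a disjoint union of smooth irreducible components, all of dimension $n+\binom{m+1}{2}-\binom{m-r+1}{2}$ by the constant rank above — I obtain a non-empty Zariski open $\zarU_{r,\iota}\subset\SS_m(\CC)$ such that for $D\in\zarU_{r,\iota}$ the fibre $\pi_{r,\iota}^{-1}(D)$, which is $\calV_{r,\iota}(A_0+D+\sum_i x_i A_i)$ read in the slice $Y_\iota=\Id_{m-r}$, is smooth and either empty or equidimensional of dimension $n-\binom{m-r+1}{2}$, i.e.\ of codimension $m(m-r)+\binom{m-r+1}{2}$ in $\CC^{n+m(m-r)}$. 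Since there are finitely many pairs $(r,\iota)$ and $\SS_m(\CC)$ is irreducible, $\zarU\coloneqq\bigcap_{r,\iota}\zarU_{r,\iota}$ is a non-empty Zariski open set, and every pencil whose constant term lies in $A_0+\zarU$ is regular. To conclude, note that $A+\varepsilon B$ is this pencil with $D=\varepsilon B$: setting $Z\coloneqq\SS_m(\CC)\setminus\zarU$ (a proper Zariski closed set) and
$$
\zarB_1\coloneqq\{\,B\in\SS_m(\CC) : \CC\cdot B\not\subset Z\,\},
$$
the set $\zarB_1$ is non-empty (any $B_0\in\zarU$ satisfies $B_0\notin Z$) and Zariski open (its complement $\{B : \varepsilon B\in Z \text{ for all }\varepsilon\}$ is cut out by the vanishing of all $\varepsilon$-coefficients of $g(\varepsilon B)$, $g$ ranging over defining equations of $Z$). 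For $B\in\zarB_1\cap\SS_m(\QQ)$ the line $\CC\cdot B$ meets $\zarU$, so $\{\varepsilon\in\CC : \varepsilon B\in Z\}$ is a proper Zariski closed, hence finite, subset of $\CC$; outside it — a fortiori for all $\varepsilon\in(0,1]$ outside it — one has $\varepsilon B\in\zarU$, so $A+\varepsilon B$ is regular.

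\textbf{Main obstacle.} No single step is deep; the care concentrates in the ``main step'', namely in recognising and quantifying the redundancy among the equations $A'_D(x)Y=0$ forced by symmetry — this is what makes the codimension $m(m-r)+\binom{m-r+1}{2}$ rather than $m(m-r)+(m-r)^2$ — and in verifying that perturbing \emph{only} the constant term $A_0$, rather than the whole tuple $(A_0,\ldots,A_n)$, already suffices to make those differentials independent. Once this is in place, the passage from a generic constant term to cofinitely many $\varepsilon$ along a generic direction $B$ is routine, resting only on the fact that a line through the origin which meets a non-empty Zariski open set meets it in a cofinite subset.
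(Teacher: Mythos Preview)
Your proposal is correct and follows essentially the same route as the paper's proof: both establish that $0$ is a regular value of the incidence equations once the constant-term perturbation is treated as a free parameter (you via smoothness of the augmented variety $\setW_{r,\iota}$ and generic smoothness of its projection to $\SS_m(\CC)$, the paper via Thom's weak transversality applied to the map $f^{(1)}$), then intersect over $(r,\iota)$ and use a line argument to pass from a generic constant term to cofinitely many $\varepsilon$ along a generic direction $B$. The only cosmetic differences are that the paper keeps the equations $Y_\iota-\Id_{m-r}$ explicit (and differentiates them with respect to the corresponding $y$-entries) whereas you substitute them away, and that the paper takes $\zarB_1$ to be the ``good'' open set $\zarU$ itself while you take the slightly larger set of $B$ whose line $\CC\cdot B$ meets $\zarU$; both choices yield the same conclusion.
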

  \begin{proof}
    We suppose w.l.o.g. that $r$ is fixed and $\iota = \{1,\ldots,m-r\}$.
    Let $\mathfrak{B}$ be a $m \times m$ symmetric matrix of variables. For $\varepsilon \in (0,1]$, we get a matrix $A(x)+\varepsilon \mathfrak{B}$, which is bilinear in the two groups of variables $x,\mathfrak{B}$.
    Let $f^{(\varepsilon)} = f^{(\varepsilon)}(x,y,\mathfrak{B})$ be the polynomial system given by the $(i,j)-$en\-tri\-es of $(A+\varepsilon \mathfrak{B})Y$ with $i \geq j$, and by all entries of $Y_\iota-\Id_{m-r}$, that is $\zeroset{f^{(\varepsilon)}}=\zeroset{(A+\varepsilon \mathfrak{B})Y,Y_\iota-\Id_{m-r}}$, after the reduction provided in \cite[Lemma 3.2]{HNS2015c}. One can thus assume that $\sharp f^{(\varepsilon)} = m(m-r)+\binom{m-r+1}{2}$.

    We now proceed with a transversality argument. Consider the map (with abuse of notation)
    \[
      \begin{array}{lrcc}
        f^{(1)} : &  \CC^{n} \times \CC^{m(m-r)} \times \CC^{\binom{m+1}{2}} & \longrightarrow & \CC^{m(m-r)+\binom{m-r+1}{2}} \\
                &  (x,y,\mathfrak{B}) & \longmapsto & f^{(1)}(x,y,\mathfrak{B}).
      \end{array}
    \]
    We claim that $0$ is a regular value of the map $f^{(1)}$ (the claim is proved in the last paragraph). This implies by Thom's Weak Transversality \cite[Prop. B.3]{din2013nearly} that there is a Zariski open set $\zarB_{r,\iota} \subset  \SS^m(\CC)$ such that, if $B \in \zarB_{r,\iota}$, then $0$ is a regular value of the section map $(x,y) \mapsto f^{(1)}(x,y,{B})$.

    We define $\zarB_1 \coloneqq \cap_r \cap_\iota \zarB_{r,\iota}$, which is a finite intersection of non-empty Zariski open sets, hence Zariski open and non-empty.
    Now, for a fixed $B \in \zarB_1$, consider the line $t B$, $t \in \RR$, in $\SS^m(\CC)$. Let $F_1 \in \CC[\mathfrak{B}]$ be the generator of the ideal of all polynomials vanishing over the algebraic hypersurface $\SS^m(\CC) \setminus \zarB_1$. Then, since $B \in \zarB_1$ by construction, $t \mapsto F_1(t B)$ does not vanish identically, hence it vanishes exactly $\deg\,F_1$ many times (counting multiplicities). We deduce that, $\varepsilon B \in \zarB_1$ except for finitely many values of $\varepsilon$. We conclude that for all $r$ and $\iota$, $\calV_{r,\iota}(A+\varepsilon B)$ is smooth and equidimensional of co-dimension $\sharp f^{(\varepsilon)} = m(m-r)+\binom{m-r+1}{2}$, for $\varepsilon \in (0,1]$ except for finitely many values.

    We prove now our claim. It follows by an argument similar to the proof of \cite[Prop.3.4]{HNS2015c}. Consider the derivatives of polynomials in $f^{(1)}(x,y,\mathfrak{B})$ with respect to the $(i,j)-$entries of $\mathfrak{B}$, with either $i \leq m-r$ or $j \leq m-r$, and those with respect to $y_{i,j}$ with $i \in \iota$. It is straightforward to check that these form a maximal submatrix of the jacobian matrix $D f^{(1)}$ whose determinant is non-zero, proving that $0$ is a regular value of $f^{(1)}$.
  \end{proof}

  \subsection{Critical points on perturbed linear matrix inequalities}
  \label{ssec:pert:spec}
  
  Let $B \in \SS^m(\QQ)$ and let $A+\varepsilon B$ be the perturbed linear pencil defined above. For a fixed $\varepsilon<1$, we consider the stratification of the hypersurface $\zeroset{\det(A+\varepsilon B)}$ given by the varieties $\setD_r(A+\varepsilon B)$ of multiple rank defects of $A+\varepsilon B$, and their lifted incident sets $\calV_{r,\iota}(A+\varepsilon B)$.

For $r<m$ and $\iota \subset \{1,\ldots,m\}$ with $\sharp\iota=m-r$, let $c \coloneqq m(m-r)+\binom{m-r+1}{2}$. We recall from the proof of Proposition \ref{prop:smoothness} that $f^{(\varepsilon)} \in \RR[x,y]^c$ consists of the $(i,j)-$en\-tri\-es of $A^{(\varepsilon)}Y$ with $i \geq j$, and by all entries of $Y_\iota-\Id_{m-r}$. We define the {\it Lagrange system} $\text{Lag}_{r,\iota}(A+\varepsilon B)$ as follows:
\begin{equation}
 \label{lagrange}
\begin{aligned}
  f^{(\varepsilon)}_i(x,y) & = 0, \,\,\,\,\, i=1,\ldots,c \\
  \sum_{i=1}^c z_i \nabla f^{(\varepsilon)}_i(x,y) & = \left(\begin{matrix} \ell \\ 0 \end{matrix}\right)
\end{aligned}
\end{equation}
where $\ell : \RR^n \rightarrow \RR$ is linear. As in Section \ref{sec:prelim}, we abuse the notation of $\ell$, and identifying it with the vector $(\ell_1,\ldots,\ell_n) \in \RR^n$ giving $\ell(x)=\ell_1x_1+\cdots+\ell_nx_n$, hence $\ell=\nabla\ell$.

The set $\zeroset{f^{(\varepsilon)}}=\calV_{r,\iota}(A+\varepsilon B)$ is smooth for generic $B$ thanks to Proposition \ref{prop:smoothness}. Hence a solution $(x^*,y^*,z^*)$ of system \eqref{lagrange} is a critical point $(x^*,y^*)$ of the restriction of $\ell$ to $\calV_{r,\iota}(A+\varepsilon B)$, equipped with a Lagrange multiplier $z^* \in \CC^{c}$. Such a solution is called {\it of rank $r$} if $\rank\,(A(x^*)+\varepsilon B)=r$.

\begin{proposition}
  \label{prop:crit}
  There are two non-empty Zariski-open sets $\zarB_2 \subset \SS^m(\CC)$ and $\zarl_2 \, \subset \CC^n$ such that, for $B \in \zarB_2 \cap \SS^m(\QQ)$, $\ell \in \zarl_2 \, \cap \QQ^n$, and $\varepsilon\in (0,1]$ out of a finite set, the following holds.
    Suppose that $\ell$ has a minimizer or maximizer $x^*_\varepsilon$ on $\spec(A+\varepsilon B)$. The projection on the $x-$space of the union, for $\iota \subset \{1,\ldots,m\}$, $\sharp\iota=m-r$, of the solution sets of rank $r$ of system \eqref{lagrange}. It is finite and it contains $x^*_\varepsilon$.
\end{proposition}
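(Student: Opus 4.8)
The statement splits into two independent claims — finiteness of the projected critical locus, and membership of the optimizer $x^*_\varepsilon$ in it — and I would prove them separately: the first by a parametrized transversality argument, the second by combining continuity of eigenvalues with a Lagrange‑multiplier bookkeeping.

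\emph{Finiteness.} Fix $r<m$ and $\iota$ with $\sharp\iota=m-r$, so that by Proposition~\ref{prop:smoothness} the incidence variety $\calV_{r,\iota}(A+\varepsilon B)=\zeroset{f^{(\varepsilon)}}$ is smooth and equidimensional of codimension $c$ for $B$ in a non‑empty Zariski open set and $\varepsilon$ outside a finite set. I would then promote $\ell=(\ell_1,\dots,\ell_n)$ to indeterminates — keeping, as in the proof of Proposition~\ref{prop:smoothness}, the perturbation matrix $\mathfrak B$ (and, if convenient, $\varepsilon$, restricted to $\varepsilon\neq 0$) as further parameters — and consider the global Lagrange map
\[
  (x,y,z,\ell)\;\longmapsto\;\Bigl(f^{(\varepsilon)}(x,y),\;\textstyle\sum_{i=1}^{c}z_i\,\nabla f^{(\varepsilon)}_i(x,y)-(\ell,0)^{t}\Bigr).
\]
The key step is to show that $0$ is a regular value of this map. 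Granting it, Thom's weak transversality \cite[Prop.~B.3]{din2013nearly} yields a non‑empty Zariski open set $\Lambda_{r,\iota}\subset\CC^n$ over which $0$ is a regular value of the corresponding section map; since source and target of that section map have the same dimension — the number of variables $n+m(m-r)+c$ matching the number of equations $c+n+m(m-r)$ — its zero set, namely the solution set of \eqref{lagrange}, consists of isolated reduced points and is therefore finite, hence so is the union over $\iota$ of its rank‑$r$ parts and their projection to the $x$‑space. One then intersects the finitely many $\Lambda_{r,\iota}$ with the set $\zarl_1$ of Lemma~\ref{lemma:2} to obtain $\zarl_2$; sets $\zarB_2$ equal to $\zarB_1$ intersected with the further Zariski open condition on $B$ that the regular‑value claim requires; and recovers the finite exceptional set of $\varepsilon$ by restricting to a line in the $\varepsilon$‑parameter, exactly as at the end of the proof of Proposition~\ref{prop:smoothness}.

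\emph{The optimizer is a rank‑$r$ critical point.} Assume $x^*_\varepsilon$ attains the minimum of $\ell$ on $\spec(A+\varepsilon B)$ (the maximum case being symmetric). Since $\ell\neq 0$, $x^*_\varepsilon$ cannot be an interior point — from an interior point one could move a little in the direction $-\ell$, staying in $\spec(A+\varepsilon B)$ while strictly decreasing $\ell$ — so $\det(A(x^*_\varepsilon)+\varepsilon B)=0$; set $r:=\rank(A(x^*_\varepsilon)+\varepsilon B)<m$, so that $x^*_\varepsilon\in\setD_r(A+\varepsilon B)\setminus\setD_{r-1}(A+\varepsilon B)$. Choosing $\iota$ so that the rows indexed by $\iota$ of a kernel basis of $A(x^*_\varepsilon)+\varepsilon B$ form an invertible $(m-r)\times(m-r)$ matrix produces a unique $y^*$ with $Y(y^*)_\iota=\Id_{m-r}$ and $(A(x^*_\varepsilon)+\varepsilon B)Y(y^*)=0$, i.e. $(x^*_\varepsilon,y^*)\in\calV_{r,\iota}(A+\varepsilon B)$; this point is smooth, of rank exactly $r$, and $(x,y)\mapsto x$ is there a local isomorphism onto the smooth rank‑exactly‑$r$ locus of $\setD_r(A+\varepsilon B)$. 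The crucial local claim is that $x^*_\varepsilon$ is a critical point of $\ell$ restricted to $\setD_r(A+\varepsilon B)$: were it not, there would be a tangent vector $v$ at $x^*_\varepsilon$ with $\ell\cdot v<0$; following a smooth curve $\gamma\subset\setD_r(A+\varepsilon B)$ with $\gamma(0)=x^*_\varepsilon$ and $\gamma'(0)=v$, the matrix $A(\gamma(t))+\varepsilon B$ has rank $\leq r$, so by continuity of eigenvalues its $m-r$ smallest eigenvalues, being close to $0$, must vanish identically while the remaining $r$ stay positive; hence $A(\gamma(t))+\varepsilon B\succeq 0$ and $\gamma(t)\in\spec(A+\varepsilon B)$ for $t\geq 0$ small, contradicting minimality. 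Transporting criticality through the local isomorphism, $(x^*_\varepsilon,y^*)$ is a critical point of $(x,y)\mapsto\ell(x)$ on the smooth $\calV_{r,\iota}(A+\varepsilon B)$; since $Df^{(\varepsilon)}(x^*_\varepsilon,y^*)$ has full rank $c$, the covector $(\ell,0)$ is a \emph{unique} linear combination of its rows, which supplies a unique $z^*$ with $(x^*_\varepsilon,y^*,z^*)$ a solution of \eqref{lagrange} of rank $r$. Hence $x^*_\varepsilon$ lies in the finite set of the first part.

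\emph{Main obstacle.} The technical core is the regular‑value claim for the global Lagrange map: because the linear form $(x,y)\mapsto\ell(x)$ does not involve the $y$‑variables, the off‑the‑shelf fact that a generic linear form has finitely many critical points on a smooth variety does not apply, and surjectivity of the Jacobian in the $y$‑gradient directions must be read off from the bilinear shape of $(A+\varepsilon\mathfrak B)Y$ together with the normalization $Y_\iota=\Id_{m-r}$ — the same mechanism as at the end of the proof of Proposition~\ref{prop:smoothness} and in \cite[Prop.~3.4]{HNS2015c}, now carried through with $\ell$ and $\mathfrak B$ as extra parameters. Two subordinate points: the closedness of $\ell(\spec(A+\varepsilon B))$, which makes ``minimizer'' meaningful, is granted by $\ell\in\zarl_1$ through Lemma~\ref{lemma:2}; and although, strictly, the admissible set of $\ell$ may depend on the chosen $B$, this is harmless for the algorithm, which fixes a rational $B\in\zarB_2$ and only then a rational $\ell\in\zarl_2$, using as in Lemma~\ref{lemma:2} that a non‑empty Zariski open subset of $\CC\la\eps\ra^n$ contains one defined over $\CC$.
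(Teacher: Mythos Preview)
Your proposal is correct and follows essentially the same approach as the paper. For finiteness, both you and the paper treat $(B,\ell)$ as parameters, invoke Thom's weak transversality to obtain a non-empty Zariski open set on which the square Lagrange system has only isolated (rank-$r$) solutions at $\varepsilon=1$, and then specialize along the line $\varepsilon\mapsto(\varepsilon B,\ell)$ to produce the finite exceptional set of $\varepsilon$; the paper defers the regular-value verification to \cite[Prop.~3]{Naldi2016357}, which is exactly the mechanism you identify as the main obstacle. For membership of the optimizer, the paper simply cites \cite[Th.~4]{Naldi2016357} (optimizer of rank $r$ is a critical point of $\ell$ on $\setD_r$) and \cite[Lem.~2]{Naldi2016357} (lifting to $\calV_{r,\iota}$), whereas you give a direct, self-contained argument via continuity of eigenvalues and the local isomorphism between the rank-exactly-$r$ locus and the incidence variety; the content is the same, your version is just more explicit.
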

\begin{proof}
  Let $r\leq m-1$ and $\iota \subset \{1,\ldots,m\}$.
  Recall by \cite[Th. 4]{Naldi2016357} that a minimizer or a maximizer $x^*$ for the SDP $\inf\{\ell(x) \mymid A(x)+\varepsilon B \succeq 0\}$, with $\rank\allowbreak(A(x^*)+\varepsilon B)=r$, is a critical point of the restriction of $\ell$ to $\setD_r(A+\varepsilon B)$. Moreover, \cite[Lem. 2]{Naldi2016357} implies that such critical points can be computed as the projections on the $x-$space, of the critical points of the restriction of $\ell$ to $\calV_{r,\iota}(A+\varepsilon B)$, for some $\iota$ (here we mean the extension $(x,y) \mapsto \ell(x)$ of $\ell$ to the $(x,y)-$space). Thus we only need to prove the finiteness of solutions of rank $r$ of system \eqref{lagrange}, for a generic perturbation matrix $B$ and a generic linear function $\ell$, {\it uniformly} on $\varepsilon$.

  We denote by $g^{(\varepsilon)}=z^TDf^{(\varepsilon)}-(\ell,0)^T$ (the polynomials in the second row of \eqref{lagrange}). The system $(f^{(\epsilon)},g^{(\epsilon)})$ is square, for a fixed $\epsilon$. Consider the polynomial map $(f^{(1)},g^{(1)})$ sending $(x,y,\mathfrak{B},z,\mathfrak{l})$ to $(f^{(1)}(x,y,\mathfrak{B}),g^{(1)}(x,y,\mathfrak{B},z,\mathfrak{l}))$, where $\mathfrak{B}$ and $\mathfrak{l}$ are variables for $B$ and $\ell$, of the right size. As in the proof of Proposition \ref{prop:smoothness}, for generic $B$ the rank of $Df^{(1)}$ is maximal. Hence, following {\it mutatis mutandis} the proof of \cite[Prop.3]{Naldi2016357}, we conclude that the jacobian matrix of $(f^{(1)},g^{(1)})$ has full rank at every point in $\zeroset{f^{(1)},g^{(1)}}$ of rank $r$. Hence there exist non-empty Zariski open sets $\zarB_{r,\iota} \subset \SS^m(\CC), \zarl_{r,\iota} \, \subset \CC^n$ such that if $(B,\ell) \in \zarB_{r,\iota} \times \zarl_{r,\iota}$ then system \eqref{lagrange} has finitely many solutions of rank $r$, for $\varepsilon=1$. We define $\zarB_2 \coloneqq \cap_r \cap_\iota \zarB_{r,\iota}$ and $\zarl_2 \coloneqq \cap_r \cap_\iota \zarl_{r,\iota}$ and we conclude the same disregarding $r$ and $\iota$.

  Let $F_2 \in \CC[\mathfrak{B},\mathfrak{l}]$ be the generator of the ideal of all polynomials vanishing over $(\SS^m(\CC) \times \CC^n) \setminus (\zarB_2 \times \zarl_2 \,\, )$. Then $F_2(B,\ell) \neq 0$, which implies that $t \mapsto F_2(tB,\ell)$ has finitely many roots, hence $(\varepsilon B,\ell) \in (\zarB_2 \times \zarl_2 \,\, )$ almost everywhere in $(0,1]$. We conclude the proof by defining the claimed finite set as the union of (1) the set of roots of $F_2$ and (2) the finite set constructed in Proposition \ref{prop:smoothness}.
\end{proof}

Note that the transversality techniques used in the proofs of Propositions \ref{prop:smoothness} and \ref{prop:crit} are non-constructive. Indeed they prove the existence of the \emph{discriminants} $F_1 \in \CC[\mathfrak{B}]$ and $F_2 \in \CC[\mathfrak{B},\mathfrak{l}]$, but do not construct them effectively. If we knew $F_1,F_2$ one could use separation bounds for real roots of univariate polynomials ({\it e.g.} \cite{HerHonTsi-mDMM-17}) to get upper bounds for the minimum of the finite sets: this would give an exact {\it semialgebraic} description of the homotopy curve. The output of our algorithm will be an exact representation of the {\it complex closure} of such curve.

  

  
  \subsection{The degree of the homotopy curve}
  We consider the Lagrange system \eqref{lagrange}, $r<m$ and $\iota \subset \{1,\ldots,m\}$ with $\sharp\iota=m-r$. For a given homotopy parameter $\varepsilon \in (0,1)$ out of the union of the finite sets defined in Propositions \ref{prop:smoothness} and \ref{prop:crit}, the system has finitely many solutions of rank $r$. When $\varepsilon$ converges to $0$, these solutions follow a (possibly reducible) semialgebraic curve. This can also be seen as a semialgebraic subset of dimension $1$ in $\RR\langle\eps\rangle^n$.
  We denote this curve by $\mathcal{C}_{r,\iota}$.

  Contrarily to the classical homotopy based on the central path, whose points lie in the interior of the feasible set, we have constructed homotopy curves containing optimal solutions of given rank of perturbed semidefinite programs. This allows to derive degree bounds that depend on this rank.
  
  \begin{proposition}\label{prop:deg:curve}
    Let $r,\iota$ be fixed, let $\mathcal{C}_{r,\iota}$ be the curve of solutions or rank $r$ of the Lagrange system \eqref{lagrange}, for positive small enough $\varepsilon$, and $\zar{\mathcal{C}_{r,\iota}}$ be its complex Zariski closure. Then
  $$
  \deg\,\zar{\mathcal{C}_{r,\iota}} \leq \left(1+2r(m-r)\right) \cdot \theta_1
  $$
  where
  \begin{equation}
    \label{degbound}
    \theta_1 = \sum_k \binom{c}{n-k}\binom{n}{c+k-r(m-r)}\binom{r(m-r)}{k}
  \end{equation}
  and $c \coloneqq m(m-r)+\binom{m-r+1}{2}$.
  \end{proposition}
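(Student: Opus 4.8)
The plan is to bound $\deg \zar{\mathcal{C}_{r,\iota}}$ by realizing the curve as (the closure of) a one-dimensional fiber of the family of Lagrange systems parametrized by $\varepsilon$, and then bounding the degree of that family via the bilinear/determinantal structure already exploited in the companion papers \cite{HNS2015c, Naldi2016357}. Concretely, I would first recall that for $\varepsilon$ outside the finite exceptional set of Propositions~\ref{prop:smoothness} and \ref{prop:crit}, the solutions of rank $r$ of system \eqref{lagrange} form a finite set whose cardinality is bounded uniformly in $\varepsilon$; the curve $\mathcal{C}_{r,\iota}$ is then contained in the Zariski closure of $\bigcup_{\varepsilon} \{\varepsilon\} \times \{\text{rank-}r\text{ solutions at }\varepsilon\}$, projected appropriately, and $\zar{\mathcal{C}_{r,\iota}}$ is at most one-dimensional. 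By the projection/fiber-dimension formula (e.g. the multiplicativity of degree under generically finite projections, plus B\'ezout-type bounds along the lines of \cite[Prop.3.4]{HNS2015c}), it suffices to bound the degree of the incidence object in $(x,y)$-space together with the Lagrange multipliers.

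The key step is the degree estimate $\theta_1$ for the system $f^{(\varepsilon)}$ on the incidence variety $\calV_{r,\iota}$ cut by the Lagrange (critical point) equations. Here I would follow the multihomogeneous B\'ezout argument that produces exactly the sum \eqref{degbound}: the equations $A^{(\varepsilon)}Y = 0$ (the $c = m(m-r)+\binom{m-r+1}{2}$ entries in $f^{(\varepsilon)}$) are bilinear in the $n$ variables $x$ and the $r(m-r)$ free entries of $Y$ (after setting $Y_\iota = \Id_{m-r}$), so the relevant Newton polytope / multidegree bookkeeping gives $\sum_k \binom{c}{n-k}\binom{n}{c+k-r(m-r)}\binom{r(m-r)}{k}$ for the degree of $\calV_{r,\iota}$ (this is essentially the determinantal degree formula, cf. \cite{stu} and \cite[Prop.3.4]{HNS2015c}, \cite[Prop.3]{Naldi2016357}). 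Then, passing from $\calV_{r,\iota}$ to the \emph{critical locus} of $\ell$ restricted to it introduces the Lagrange multipliers $z$ and the equations $\sum z_i \nabla f^{(\varepsilon)}_i = (\ell,0)^T$; by the standard analysis of polar/critical varieties on a smooth equidimensional variety of this codimension (the "$1 + 2\,\dim$" or "$1 + 2\,r(m-r)$" factor comes from bounding the degree of the conormal variety / critical locus in terms of the degree of the base variety), one picks up the multiplicative factor $1 + 2r(m-r)$. Combining gives $\deg \zar{\mathcal{C}_{r,\iota}} \le (1 + 2r(m-r))\,\theta_1$.

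Finally, I would verify that the $\varepsilon$-parameter does not inflate the degree: because $A + \varepsilon B$ is \emph{affine linear} in $\varepsilon$, adjoining $\varepsilon$ as an extra variable replaces each bilinear equation by one of the same bidegree with a tame dependence on $\varepsilon$, so the curve swept out in $(\varepsilon, x, y, z)$-space has degree bounded by the same product (the fiber over a generic $\varepsilon$ being the finite rank-$r$ solution set, and the total space being one-dimensional over the $\varepsilon$-line). Projecting out $y$, $z$ and $\varepsilon$ can only decrease the degree, yielding the claimed bound on $\zar{\mathcal{C}_{r,\iota}}$ in $x$-space (or in $\RR\langle\eps\rangle^n$). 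The main obstacle I anticipate is the careful justification of the factor $1 + 2r(m-r)$: one must argue that the critical-point equations, even in the presence of the Lagrange multipliers and the rank-$r$ restriction, define a variety whose degree is controlled by that factor times $\deg \calV_{r,\iota}$ — this requires the smoothness and equidimensionality from Proposition~\ref{prop:smoothness} and a clean statement of the polar-variety degree bound (as in \cite{SaSc03, din2013nearly}) adapted to this determinantal setting, rather than a naive B\'ezout count which would be far worse.
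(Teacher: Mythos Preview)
Your proposal misidentifies both what $\theta_1$ is and where the factor $1+2r(m-r)$ comes from, and these two errors are linked.

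You treat $\theta_1$ as the degree of the incidence variety $\calV_{r,\iota}(A+\varepsilon B)$, obtained from a bilinear B\'ezout count in the two groups $x$ (size $n$) and the free $y$-entries (size $r(m-r)$). But a bilinear count in two groups produces a single binomial coefficient (or a one-index sum), not the three-factor sum $\sum_k \binom{c}{n-k}\binom{n}{c+k-r(m-r)}\binom{r(m-r)}{k}$. In the paper, $\theta_1$ is not a degree of the incidence variety at all: after reducing $f^{(\varepsilon)}$ by $Y_\iota=\Id_{m-r}$, one keeps $\varepsilon$ as a variable, groups it with $x$ into $\xi=(\varepsilon,x)$ of size $n+1$, and treats the Lagrange multipliers $z$ as a third group of size $c$. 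The full Lagrange system $(f^{(\varepsilon)},g)$ is then trilinear in $(\xi,\overline{y},z)$ with multidegrees $(1,1,0)$ for the $c$ incidence equations, $(0,1,1)$ for the $n$ equations $\partial_x$, and $(1,0,1)$ for the $r(m-r)$ equations $\partial_{\overline{y}}$. A single multilinear B\'ezout bound on this one-dimensional variety gives the sum of the coefficients of $(s_1+s_2)^c(s_2+s_3)^n(s_1+s_3)^{r(m-r)}$ modulo $\langle s_1^{n+2},s_2^{r(m-r)+1},s_3^{c+1}\rangle$; exactly three monomials survive, with coefficients $\theta_1,\theta_2,\theta_3$, and $\theta_1$ is the one attached to $s_1^{n}s_2^{r(m-r)}s_3^{c}$.

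Consequently, the factor $1+2r(m-r)$ does not come from any polar-variety or conormal degree bound layered on top of $\deg\calV_{r,\iota}$. It is purely combinatorial: one compares $\theta_2$ and $\theta_3$ to $\theta_1$ term by term via the ratio of adjacent binomial coefficients, obtaining $\theta_2\le r(m-r)\,\theta_1$ and $\theta_3\le\theta_2$, whence $\theta_1+\theta_2+\theta_3\le(1+2r(m-r))\theta_1$. Your own diagnosis that justifying this factor is ``the main obstacle'' is accurate --- the polar-variety route you sketch would need a tailored statement that is not available, whereas the paper sidesteps the issue entirely by doing one B\'ezout count on the whole Lagrange system with $\varepsilon$ included as a variable.
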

\begin{proof}
  We first compute a polynomial system equivalent to \eqref{lagrange}. We make the substitution $Y_\iota=\Id_{m-r}$ that eliminates variables $\{y_{i,j} \mymid i \in \iota\}$ in the vector $f^{(\varepsilon)}$ defining the incidence variety $\calV_{r,\iota}(A+\varepsilon B)$, hence we suppose $f^{(\varepsilon)} \in \QQ[\varepsilon,x,\overline{y}]^c$, with $c = m(m-r)-\binom{m-r}{2}=\frac{(m-r)(m+r+1)}{2}$ and $\overline{y}=\{y_{i,j} \mymid i \not\in \iota\}$. (Indeed, $\binom{m-r}{2}$ is the number of redundancies eliminated by \cite[Lemma 3.2]{HNS2015c} recalled in the proof of Proposition \ref{prop:smoothness}.) Above we have intentionally abused of the notation of $f^{(\varepsilon)}$ and $c$. Next, the new polynomials $f_i$ do not depend on $y \setminus \overline{y}$. Hence, defining $g \coloneqq \sum_{i=1}^c z_i \nabla f^{(\varepsilon)}_i(x,\overline{y}) - (\nabla\ell,0)^T \in \QQ[\varepsilon,x,\overline{y},z]$, with $z=(z_1,\ldots,z_c)$, one has $\sharp g = \sharp x+\sharp \overline{y}=n+r(m-r)$.
  
  We conclude that the Lagrange system \eqref{lagrange} is given after reduction by the entries of $f^{(\varepsilon)}$ and $g$, that are multilinear in the three groups of variables $\xi \coloneqq (\varepsilon,x),\overline{y}$ and $z$. The multidegree with respect to $(\xi,\overline{y},z)$ is respectively
\begin{itemize}
  \item $\text{mdeg}_{(\xi,\overline{y},z)}(f^{(\varepsilon)}_i) = (1,1,0)$, for $i=1,\ldots,c$
  \item $\text{mdeg}_{(\xi,\overline{y},z)}(g_i) = (0,1,1)$, for $i=1,\ldots,n$
  \item $\text{mdeg}_{(\xi,\overline{y},z)}(g_{n+j}) = (1,0,1)$, for $j=1,\ldots,r(m-r)$
\end{itemize}
We compute below a multilinear B\'ezout bound of $\deg\,\zar{\mathcal{C}_{r,\iota}}$ (see \cite[App.H.1]{din2013nearly}). This is given by the sum of the coefficients of the polynomial
$$
P=(s_1+s_2)^c (s_2+s_3)^n (s_1+s_3)^{r(m-r)}
$$
modulo the monomial ideal $I = \langle s_1^{n+2}, s_2^{r(m-r)+1}, s_3^{c+1} \rangle$. Since the maximal admissible power modulo $I$ of $s_1$ (resp. of $s_2,s_3$) is $n+1$ (resp. $r(m-r),c$) and since $P$ is homogeneous of degree $c+n+r(m-r)$ we get
$$
P \equiv \theta_1 s_1^{n}s_2^{r(m-r)}s_3^{c}+\theta_2 s_1^{n+1}s_2^{r(m-r)-1}s_3^{c}+\theta_3 s_1^{n+1}s_2^{r(m-r)}s_3^{c-1}
$$
modulo $I$,
where $\theta_i = \theta_i(m,n,r)$ are the corresponding coefficiens in the expansion of $P$, hence the bound is $\theta_1+\theta_2+\theta_3$. Just by expanding $P$ and by solving a linear system over $\ZZ$ one gets the expression in \eqref{degbound}, within the range $0 \leq k \leq \min\{n-c+r(m-r),r(m-r)\}$. A similar formula holds for $\theta_2$ where $n-k+1$ substitutes $n-k$ in the first binomial coefficient. We deduce that
$$
\theta_2 \leq \max_k\left\{\frac{c-n+k}{n-k+1}\right\}\theta_1 \leq r(m-r) \theta_1.
$$
Moreover the expression of $\theta_3$ equals that of $\theta_2$ except for the second binomial coefficient which is smaller, hence $\theta_3 \leq \theta_2 \leq r(m-r) \theta_1$, and we conclude.
\end{proof}


Recall that the algorithm in \cite{HNS2015c}, which does not rely on homotopy, solves LMI  in the regular case, that is, under some genericity properties. In the current paper, which relies on homotopy, we do not assume these genericity properties. We expect that in degenerate situations the degree of the homotopy curve will exceed that of the univariate representation computed in the regular case. We prove that this degree gap is controlled, namely, that the extra factor is linear in $n$ and in the rank-corank coefficient $r(m-r)$.

\begin{proposition}\label{prop:comparison}
  Let $\theta = \theta(m,n,r)$ be the bound computed in \cite[Prop.5.1]{HNS2015c}. For all $r$ and $\iota$ as above
  $$
  \deg\,\zar{\mathcal{C}_{r,\iota}} \leq \left(1+\allowbreak 2r(m-r)\right) n \theta.
  $$
\end{proposition}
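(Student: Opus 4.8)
The plan is to combine Proposition~\ref{prop:deg:curve} with a combinatorial comparison between the two multilinear B\'ezout numbers involved. Proposition~\ref{prop:deg:curve} already gives $\deg\,\zar{\mathcal{C}_{r,\iota}}\le(1+2r(m-r))\,\theta_1$, with $\theta_1$ the quantity of \eqref{degbound}; hence it suffices to prove the inequality $\theta_1\le n\,\theta$, where $\theta=\theta(m,n,r)$ is the bound of \cite[Prop.~5.1]{HNS2015c}.

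To this end I would first recall the explicit form of $\theta$. It is the multilinear B\'ezout number of the Lagrange system attached to the \emph{unperturbed} incidence variety $\calV_{r,\iota}(A)$: after the same elimination of redundancies (\cite[Lemma~3.2]{HNS2015c}) used in the proof of Proposition~\ref{prop:deg:curve}, that system has exactly the same multidegree pattern --- $c$ equations of multidegree $(1,1,0)$, $n$ of multidegree $(0,1,1)$, $r(m-r)$ of multidegree $(1,0,1)$ --- but acts on three groups of variables of sizes $n$, $r(m-r)$ and $c$, the first group no longer carrying the homotopy parameter $\varepsilon$. Consequently $\theta$ is, like $\theta_1$, extracted from the polynomial $P=(s_1+s_2)^c(s_2+s_3)^n(s_1+s_3)^{r(m-r)}$, and \cite[Prop.~5.1]{HNS2015c} records it as a sum over an index $k$ of products of three binomial coefficients of the same type as those appearing in \eqref{degbound}.

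Then I would carry out a term-by-term comparison, in the spirit of the estimates $\theta_2,\theta_3\le r(m-r)\,\theta_1$ at the end of the proof of Proposition~\ref{prop:deg:curve}. The two sums range over the same index set, controlled by $0\le k\le\min\{n-c+r(m-r),\,r(m-r)\}$; using the absorption identities $\binom{n}{j}=\tfrac{n}{\,n-j\,}\binom{n-1}{j}$ and $\binom{c}{n-k}=\tfrac{c-n+k+1}{\,n-k\,}\binom{c}{n-1-k}$ together with the range bound $c-n+k\le r(m-r)$, one checks that the $k$-th summand of $\theta_1$ is bounded by $n$ times the $k$-th summand of $\theta$. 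Summing over $k$ gives $\theta_1\le n\theta$, and combined with Proposition~\ref{prop:deg:curve} this yields the claim; the factors $1+2r(m-r)$ and $n$ are retained merely to state the bound in a clean uniform form (only $n\ge 1$ is needed to close the argument).

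The main obstacle is precisely the bookkeeping at the boundary of the summation range, as already encountered in the $\theta_2,\theta_3$ estimates: when $c+k-r(m-r)$ attains its extreme value $n$, the binomial coefficient coming from the shifted sum for $\theta$ vanishes whereas the matching one in $\theta_1$ equals $1$, so that extremal term cannot be dominated term by term and must instead be absorbed into $n$ times a neighbouring term of $\theta$; showing that the available slack in the factor $n$ suffices for this absorption is the one delicate point, the remainder being routine manipulation of binomial coefficients.
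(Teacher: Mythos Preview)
Your overall strategy---reduce to $\theta_1\le n\theta$ via Proposition~\ref{prop:deg:curve} and then compare the two sums term by term---is exactly what the paper does. But the paper's execution is much simpler than what you outline, and the ``delicate point'' you flag does not exist.

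Writing $\theta=\sum_k a_k$ and $\theta_1=\sum_k b_k$ over the same index set, the paper observes that the ratio is \emph{exactly}
\[
\frac{b_k}{a_k}=\frac{n}{c+k-r(m-r)},
\]
coming from the identity $\binom{n}{j}=\tfrac{n}{j}\binom{n-1}{j-1}$ applied to the middle binomial. So all one needs is $c+k-r(m-r)\ge 1$ for every $k$ in the range, and that holds because $k\ge 0$ forces $c+k-r(m-r)\ge c-r(m-r)=\binom{m-r+1}{2}\ge 1$ (recall $r<m$). There is no boundary term to absorb: at the top of the range $c+k-r(m-r)=n$ one has $a_k$ carrying $\binom{n-1}{n-1}=1$, not $0$, so the term-by-term bound goes through uniformly. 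Your absorption argument and the worry about a vanishing extremal $a_k$ are unnecessary detours; once you have the correct form of $a_k$ the proof is two lines.
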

\begin{proof}
  Let $\theta_1$ be the expression in \eqref{degbound}. We prove that $\theta_1 \leq n \theta$ and we conclude. Indeed, let $\theta = \sum_k a_k$ and $\theta_1 = \sum_k b_k$. Then
  $$
  \frac{b_k}{a_k} = \frac{n}{c+k-r(m-r)}
  $$
  that does not exceed $n$ for all $k$. Hence $\theta_1 \leq \sum_k na_k = n \theta$.
\end{proof}

\section{Algorithm}\label{sec:algo}

\subsection{Description}\label{ssec:algo:descr}

This section contains the formal description of a homotopy-based algorithm for solving the semidefinite program in \eqref{originalSDP}, called \textsc{DegenerateSDP}.

We first define the data structures we use to represent algebraic sets of dimension $0$ and $1$ during the algorithm. A {\it zero-dimensional parametrization} of a finite set $W \subset \CC^n$ is a vector $Q=(q_0,q_1,\ldots,\allowbreak{}q_{n},q) \in \QQ[t]^{n+2}$ such that $q_0,q$ are coprime and
$$
W = \left\{ a \in \CC^n : \exists t \in \RR, a_i = \frac{q_i(t)}{q_0(t)}, q(t) = 0\right\}.
$$
Similarly a {\it one-dimensional parametrization} of a curve $\mathcal{C} \subset \CC^n$ is a vector $Q=(q_0,q_1,\ldots,q_{n},q) \in \QQ[t,u]^{n+2}$ with $q_0,q$ coprime and
$$
\mathcal{C} = \left\{ a \in \CC^n : \exists t,u \in \RR, a_i = \frac{q_i(t,u)}{q_0(t,u)}, q(t,u) = 0\right\}.
$$
Abusing notation we denote by $\zeroset{Q}$ the sets in the right part of the
previous equalities. If $Q$ is now a finite {\it list} of parametrizations, $\zeroset{Q}$ denotes
the union of $\zeroset{Q_j}$ for $Q_j$ in $Q$, \new{and every
  $x^*\in\zeroset{Q}$ is encoded by $(Q,[a_*,b_*])$, where $a_*,b_* \in \QQ$ and
  $[a_*,b_*]$ is a separating interval for the root (of the polynomial $q$ in one of the $Q_j$) that corresponds to $x^*$.}
These representations for finite sets and curves are standard in real algebraic
geometry, and are called parametrizations in the sequel. By convention, $(\,)$
is a parametrization for $\emptyset$.

We also define the following subroutines manipulating this kind of representations:
\begin{itemize}
\item
    {\sf ODP}. With input a polynomial system $f = (f_1,\ldots,f_s)$ defining a one-di\-men\-sio\-nal algebraic set $\zeroset{f}$, and a set of variables $x$, it returns a one-di\-men\-sio\-nal parametrization of the projection of $\zeroset{f}$ on the $x-$space.
\item
  {\sf CUT}. Given a one-dimensional parametrization $Q$ of the zero set $\zeroset{f}\subset\CC^{n+1}$ of polynomials $f_1,\ldots,f_s \in \QQ[\varepsilon,x]$, it returns a zero-dimensional parametrization of the projection on the $x-$space of the limit of $\zeroset{f}$ for $\varepsilon \rightarrow 0^+$.
\item
  {\sf UNION}. Given two parametrizations $Q_1,Q_2$, it returns a pa\-ra\-me\-tri\-za\-tion $Q$ such that $\zeroset{Q} = \zeroset{Q_1} \cup \zeroset{Q_2}$.
\end{itemize}

The input of \textsc{DegenerateSDP} is the $n-$variate $m \times m$ symmetric linear matrix $A(x)$ defining the spectrahedron $\spec(A)$, and a linear form $\ell$, that is supposed to be generic enough to satisfy the assumptions of Theorem \ref{thm:algo} below. The output is either a list $Q=[Q_1,\ldots,Q_{m-1}]$ of zero-dimensional parametrizations containing a solution $x^*$ to the original LMI (encoded as described above by $(Q,[a_*,b_*])$), or $(\,)$, in which case the original SDP \eqref{originalSDP} is either infeasible ($\spec(A)=\emptyset$) or the infimum in \eqref{originalSDP} equals $-\infty$.

Below we describe each step of the algorithm. 

\begin{algorithm}[H]
  \label{DegSDP}
  \begin{algorithmic}
    \Procedure{DegenerateSDP}{$A,\ell$}\label{step1}
    \State \label{step2} Generate $B \in \SS^m_+(\QQ)$
    \State \label{step4} $Q \leftarrow [\,]$
    \For{$r=1,\ldots,m-1$} \label{step5}
    \State \label{step6} $Q_r \leftarrow (1)$
    \For{$\iota \subset \{1,\ldots,m\}$ with $\sharp\iota=m-r$} \label{step7}
    \State \label{step8} $L \leftarrow \text{Lag}_{r,\iota}(A+\varepsilon B)$
    \State \label{step9} $Q_{r,\iota} \leftarrow {\sf ODP}(L,x)$
    \State \label{step10} $Q_r \leftarrow {\sf UNION}(Q_r,Q_{r,\iota})$
    \EndFor
    \State \label{step11} $Q \leftarrow [Q,{\sf CUT}(Q_r)]$
    \EndFor
    \State \new{{\bf if} {$\spec(A) \cap \zeroset{Q} = \emptyset$}
      {\bf then} {\Return{(\,)} }}
       \label{step12}
    \State \label{step13} \Return{$(Q,[a_*,b_*])$}
    \EndProcedure
  \end{algorithmic}
\end{algorithm}

Note that $\varepsilon$ in the previous formal description is treated as a variable, so that the polynomials in $L$ at step \ref{step8} define a curve. Remark that all solutions satisfy $\det A(x)=0$ hence $\rank A(x) \leq m-1$.


We show in Theorem \ref{thm:algo} that \textsc{DegenerateSDP} is correct and computes solutions to the original LMI as limits of perturbed solutions. We use the results of Sections \ref{sec:prelim} and \ref{sec:homotopy} and refer to the notation of Zariski open sets constructed in Corollary \ref{lemma:2} and \ref{lemma:3}, and in Proposition \ref{prop:smoothness} and \ref{prop:crit}.

\begin{theorem}\label{thm:algo}
  Let $A$ be a $m \times m$ $n-$variate symmetric linear matrix.
  Let $B \in \zarB_1 \cap \zarB_2 \cap \SS^m_+(\QQ)$, and $\ell \in \zarl_1 \cap \zarl_2 \cap \QQ^n$.
\begin{enumerate}
\item \label{A}
  If $A(x^*)=0$ for some $x^* \in \RR^n$, then $x^*$ is a minimizer in \eqref{originalSDP} or $\ell$ is unbounded from below on $\spec(A)$.
\item \label{B}
  Otherwise, $(Q,[a_*,b_*])=\mathrm{\textsc{DegenerateSDP}}(A,\ell)$ fulfils the following condition. If $x^* \in \spec(A)$ is a minimizer in \eqref{originalSDP} then $x^* \in \spec(A) \cap \zeroset{Q}$. Conversely, if $\spec(A) \neq \emptyset$, and $\ell$ is not unbounded from below on $\spec(A)$, then $\spec(A) \cap \zeroset{Q}$ contains a minimizer in \eqref{originalSDP}. \simone{ If the restriction of $\ell$ to $\spec(A)$ is proper, then $\zeroset{Q}$ contains the minimizers of the restriction of $\ell$ to $\spec(A+\eps B)$, for small enough $\eps>0$}.
\end{enumerate}
\end{theorem}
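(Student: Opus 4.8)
The plan is to prove the two parts of Theorem~\ref{thm:algo} by combining the topological lemmas of Section~\ref{sec:prelim} (which control how boundary points of $\ell(\spec(A))$ relate to boundary points of $\ell(\spec(A+\eps B))$) with the genericity and finiteness results of Section~\ref{sec:homotopy} (which say that optimizers on the perturbed spectrahedron are captured by the Lagrange systems and hence by the parametrizations produced by \textsc{DegenerateSDP}). The key observation linking the two is that, under the hypotheses $B \in \SS^+_m(\QQ)$ and $\ell$ generic, a minimizer of $\ell$ on $\spec(A)$ exists if and only if $\ell(\spec(A))$ has a nonempty boundary (using that $\ell(\spec(A))$ is closed by Lemma~\ref{lemma:2}), and similarly on the $\eps$-level.

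\medskip

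\textbf{Part~(\ref{A}).} Here I would argue directly: if $A(x^*)=0$ then $A(x) = A(x) - A(x^*)$ vanishes at $x^*$, and the pencil $A(x)$ equals the \emph{homogeneous} part $\sum_i (x_i - x_i^*) A_i$. Translating so that $x^*=0$, the spectrahedron $\spec(A)$ becomes a cone with vertex at the origin. A linear function on a cone with vertex $0$ is either identically minimized at the vertex (value $0$) or, if it takes a negative value somewhere, is unbounded from below along the ray through that point. Hence either $x^*$ is a minimizer in \eqref{originalSDP} or $\ell$ is unbounded below on $\spec(A)$, which is exactly the dichotomy claimed. (The genericity of $\ell$ is not even needed for this part; it handles the degenerate situation that the algorithm is designed to sidestep.)

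\medskip

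\textbf{Part~(\ref{B}), first direction.} Assume $A(x)\neq 0$ on all of $\spec(A)$, equivalently $\rank A(x)\ge 1$ everywhere on $\spec(A)$, so every $x\in\spec(A)$ lies in some $\setD_r(A)$ with $1\le r\le m-1$. Let $x^*\in\spec(A)$ be a minimizer in \eqref{originalSDP}; then $\ell(x^*)$ lies in the boundary of $\ell(\spec(A))$. By the first assertion of Lemma~\ref{lemma:3} there is $x^*_\eps \in \spec(A+\eps B)$ with $\ell(x^*_\eps)$ on the boundary of $\ell(\spec(A+\eps B))$ and $\lim_\eps x^*_\eps = x^*$. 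So $x^*_\eps$ is a minimizer of $\ell$ on the perturbed spectrahedron. Let $r = \rank(A(x^*_\eps)+\eps B)$; since $A+\eps B\succeq 0$ cannot be $0$ (as $\eps B\succ 0$), we have $1 \le r \le m-1$. By Proposition~\ref{prop:crit}, for $\eps$ outside a finite bad set, $x^*_\eps$ belongs to the projection on $x$-space of the rank-$r$ solutions of the Lagrange system $\text{Lag}_{r,\iota}(A+\eps B)$ for some $\iota$. Treating $\eps$ as a variable, these rank-$r$ solutions trace the curve $\mathcal{C}_{r,\iota}$, and \textsc{DegenerateSDP} computes, via \textsf{ODP} and then \textsf{CUT}, a zero-dimensional parametrization of the projection of $\lim_\eps \mathcal{C}_{r,\iota}$; thus $x^* = \lim_\eps x^*_\eps \in \zeroset{Q_r} \subseteq \zeroset{Q}$. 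Since also $x^*\in\spec(A)$, we get $x^* \in \spec(A)\cap\zeroset{Q}$.

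\medskip

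\textbf{Part~(\ref{B}), converse.} Suppose $\spec(A)\neq\emptyset$ and $\ell$ is bounded below on $\spec(A)$. Because $\ell(\spec(A))$ is closed (Lemma~\ref{lemma:2}) and bounded below, it attains its infimum, so a minimizer $x^*$ exists; the value $\ell(x^*)$ is a boundary point of $\ell(\spec(A))$. Now run the argument of the first direction to produce $x^*_\eps\in\spec(A+\eps B)$ with $\ell(x^*_\eps)$ on the boundary of $\ell(\spec(A+\eps B))$ — alternatively, produce such an $x^*_\eps$ from the compactness argument inside the proof of Lemma~\ref{lemma:3} (intersecting with a ball around $x^*$). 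By Proposition~\ref{prop:crit}, $x^*_\eps$ lies among the rank-$r$ solutions of some $\text{Lag}_{r,\iota}(A+\eps B)$, hence $\lim_\eps x^*_\eps$ lies in $\zeroset{Q_r}\subseteq\zeroset{Q}$; and by the second (viceversa) assertion of Lemma~\ref{lemma:3}, $\ell(\lim_\eps x^*_\eps)$ lies in the boundary of $\ell(\spec(A))$, i.e.\ $\lim_\eps x^*_\eps$ is itself a minimizer in \eqref{originalSDP}. Also $\lim_\eps x^*_\eps \in \spec(A)$ since $\spec(A)$ is closed and $x^*_\eps$ is infinitesimally close to $\spec(A)$ (Lemma~\ref{lemma:1} and continuity of eigenvalues). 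Therefore $\spec(A)\cap\zeroset{Q}$ contains a minimizer, as claimed.

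\medskip

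\textbf{The main obstacle} I anticipate is the bookkeeping of exceptional sets and the uniform-in-$\eps$ statements: one must check that the single matrix $B\in\zarB_1\cap\zarB_2\cap\SS^+_m(\QQ)$ and the single $\ell\in\zarl_1\cap\zarl_2\cap\QQ^n$ simultaneously make \emph{all} the relevant $\mathcal{V}_{r,\iota}(A+\eps B)$ smooth, all the Lagrange systems have finitely many rank-$r$ solutions, and all the relevant images $\ell(\cdot)$ closed, for a cofinite set of $\eps\in(0,1]$ — and then that the limit procedure \textsf{CUT} genuinely recovers $\lim_\eps x^*_\eps$ rather than dropping a component. This is exactly where Propositions~\ref{prop:smoothness} and~\ref{prop:crit} are invoked, together with the soundness of the subroutines \textsf{ODP}, \textsf{CUT}, \textsf{UNION}; the delicate point is that the finitely many bad $\eps$ from each proposition are harmless because the limit of the good part of $\mathcal{C}_{r,\iota}$ already contains $x^*$ (the curve is one-dimensional and $x^*$ is on its Zariski closure). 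A secondary subtlety is making sure the rank index $r$ of the optimal $x^*_\eps$ is indeed $\le m-1$ and $\ge 1$ so that it is actually swept by one of the curves $\mathcal{C}_{r,\iota}$ that the algorithm iterates over — which is why Part~(\ref{A}) is split off and the case $A(x)\equiv 0$ is excluded in Part~(\ref{B}).
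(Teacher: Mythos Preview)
Your proof is correct and follows essentially the same route as the paper: Part~(\ref{A}) via the cone argument, and Part~(\ref{B}) via Lemma~\ref{lemma:3} together with Proposition~\ref{prop:crit}, exactly as the paper does. The only cosmetic difference is in the converse of Part~(\ref{B}), where you first establish existence of a minimizer on $\spec(A)$ (using closedness from Lemma~\ref{lemma:2}) and then invoke the forward direction, whereas the paper argues directly on the perturbed side by showing $\ell(\spec(A+\eps B))$ is closed and bounded below, picking a minimizer $x^*_\eps$ there, and applying the ``viceversa'' part of Lemma~\ref{lemma:3} to its limit --- both arguments are equivalent and use the same ingredients.
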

\begin{proof}
  First, suppose that $A(x^*)=0$ for some $x^* \in \RR^n$. Then $A_0=-\sum_ix^*_iA_i$, hence $A(x)=(x_1-x^*_1)A_1+\cdots+(x_n-x^*_n)A_n$. We deduce that $\spec(A)$ is the image under the translation $x \mapsto x+x^*$ of a cone, that is: either $\spec(A) = \{x^*\}$, in which case $\ell\equiv\ell(x^*)$ on $\spec(A)$, and $x^*$ is a minimizer for \eqref{originalSDP}, or $\spec(A)$ is an unbounded convex cone with origin in $x^*$. In the second case, since $\ell$ is linear, either its infimum on $\spec(A)$ is attained at the origin $x^*$, or its maximum is attained in $x^*$ and $\ell$ is unbounded from below on $\spec(A)$.

  We prove the first sentence in \eqref{B}. Assume that $x^* \in \spec(A)$ is a minimizer in \eqref{originalSDP}. Then $\ell(x^*)$ lies on the boundary of $\ell(\spec(A))$.
  By Lemma \ref{lemma:3}, we get that there exists $x^*_\varepsilon \in \spec(A + \varepsilon B)$ such that $\ell(x^*_\varepsilon)$ lies on the boundary of $\ell(\spec(A + \varepsilon B))$ and $\lim_\eps x^*_\eps = x^*$.   \simone{ Remark that by Lemma \ref{lemma:3} we also get that if the restriction of $\ell$ to $\spec(A)$ is proper, then the restriction of $\ext(\ell,\RR\la\eps\ra^n)$ to $\spec(A+\eps B)$ is proper, hence $\ell(\spec(A+\eps B))$ is closed. We deduce that $x^*_\varepsilon \in \ell(\spec(A + \varepsilon B))$.}
Hence for $\eps > 0$, $x^*_\varepsilon$ is a minimizer of $\ell$ on $\spec(A + \varepsilon B) \subset \RR^n$. By Proposition \ref{prop:crit}, there exists $r \in \{1,\ldots,m-1\}$, $\iota \subset \{1,\ldots,m\}$ with $\sharp\iota = m-r$, $y^*_\eps$ and $z^*_\eps$, such that $(x^*_\eps,y^*_\eps,z^*_\eps)$ is a solution of the Lagrange system $\text{Lag}_{r,\iota}(A+\varepsilon B)$. We deduce that for $\eps > 0$, $x^*_\eps$ is parametrized by the one-dimensinal parametrization $Q_{r,\iota} = {\sf ODP}(L)$ computed at step \ref{step9} of \textsc{DegenerateSDP}, hence by $Q_{r}$. We deduce that $Q$ parametrizes the limit $x^* = \lim_\eps x^*_\eps$, that is $x^* \in \spec(A) \cap \zeroset{Q}$.

  We finally come to the second sentence in \eqref{B}. Since $\ell$ is not unbounded on $\spec(A)$, and $\spec(A) \neq 0$, then the same holds for $\ell$ on $\spec(A+\varepsilon B)$. By Corollary \ref{lemma:2}, $\ell(\spec(A))$ and $\ell(\spec(A+\varepsilon B))$ are closed intervals. We deduce that the boundary of $\ell(\spec(A+\varepsilon B))$ is non-empty. Let $x^*_\varepsilon$ be such that $\ell(x^*_\varepsilon)$ lies on the boundary of $\ell(\spec(A+\varepsilon B))$. Since $\spec(A) \neq \emptyset$, by \ref{lemma:3} $x^* \coloneqq \lim_\varepsilon x^*_\varepsilon \in \spec(A) \cap \zeroset{Q}$ is such that $\ell(x^*)$ lies in the boundary of $\ell(\spec(A))$, hence a minimizer of SDP \eqref{originalSDP}.
\end{proof}

\new{To conclude, we make explicit the following fact that follows from Theorem \ref{thm:algo}. Recall that a generic linear form over a non-empty convex set is either unbounded from below ($\inf \ell = -\infty$) or its infimum is attained. Theorem \ref{thm:algo} implies that if $\ell$ is a generic linear form, then $\spec(A) \cap \zeroset{Q} = \emptyset$ if and only if $\spec(A) = \emptyset$ or $\ell$ is unbounded from below on $\spec(A)$. We conclude that up to genericity assumptions on the linear form, the algorithm is correct, since it returns a non-empty rational parametrization of a curve containing the minimizer if and only if problem \eqref{originalSDP} has a feasible solution.}




\subsection{Complexity analysis}\label{ssec:algo:compl}

This section contains a rigourous analysis of the arithmetic complexity of \textsc{DegenerateSDP}.
Let us first give an overview of the algorithms that are used to perform the subroutines in \textsc{DegenerateSDP}.

The computation of a one-dimensional parametrization of the homotopy curve $\zar{\mathcal{C}_{r,\iota}}$ at step
\ref{step9}, that is the routine {\sf ODP}, is done in two steps. First, we instantiate the system
$\text{Lag}_{r,\iota}(A+\varepsilon B)$ to a generic $\varepsilon=\overline{\varepsilon}$. By Proposition
\ref{prop:crit} we deduce that the obtained system is zero-dimensional. We use \cite{SaSc18} to compute a
zero-dimensional rational parametrization of this system.

The second steps consists in {\it lifting} the parameter $\varepsilon$ and in computing a {\it parametric
  geometric resolution} of $\text{Lag}_{r,\iota}(A+\varepsilon B)$ with the algorithm in \cite{Schost03}, that
is, a parametric analogue of \cite{GiLeSa01}. In our context, there is only one parameter, that is $\varepsilon$.

The routine {\sc CUT} can be performed via the algorithm in \cite{rrs} and, finally, the cost of the routine {\sc UNION} is given in \cite[Lem.G.3]{din2013nearly}.


To keep notations simple, let $L = (L_1,\ldots,L_N) \in \QQ[\varepsilon,t_1,\ldots,t_N]$ be the polynomials defining the Lagrange system \eqref{lagrange}, in the reduced form as in the proof of Proposition \eqref{prop:deg:curve}. Hence $N=c+n+r(m-r)$, where $c = {(m-r)(m+r+1)}/{2}$. The complex algebraic set $\zar{\mathcal{C}_{r,\iota}} = \zeroset{L}$ is a curve whose degree is bounded by Proposition \ref{prop:deg:curve}.

\begin{theorem}\label{theo:complexity}
  Let $L$ and $N$ be as above. Under the assumptions of Theorem \ref{thm:algo}, the output $Q=\textsc{DegenerateSDP}(A,\ell)$ is returned within
  $$
   \widetilde{O}\left( n \sum_r \binom{m}{r} r(m-r)N^4 \theta^2 \right)
  $$
    arithmetic operations over $\QQ$, where $\widetilde{O}(T)={O}(T\log^a(T))$ for some $a$ and $\theta \leq \binom{m^2+n}{n}^3$.
\end{theorem}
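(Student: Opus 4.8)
The plan is to bound the cost of \textsc{DegenerateSDP} by summing the costs of the three subroutines over all iterations of the double loop on $r$ and $\iota$. First I would fix $r \in \{1,\ldots,m-1\}$ and $\iota$ with $\sharp\iota = m-r$, and estimate the cost of one call to {\sf ODP} at step \ref{step9}. Following the two-step description preceding the theorem: step one instantiates $\varepsilon$ to a generic value $\overline{\varepsilon}$ and solves the resulting zero-dimensional system $\text{Lag}_{r,\iota}(A+\overline{\varepsilon}B)$ using \cite{SaSc18}; step two lifts the parameter $\varepsilon$ and computes a parametric geometric resolution via \cite{Schost03,GiLeSa01}. Each of these costs is polynomial in the number $N = c + n + r(m-r)$ of variables and polynomial in the degree of the curve $\zar{\mathcal{C}_{r,\iota}}$, which by Proposition \ref{prop:deg:curve} is at most $(1 + 2r(m-r))\theta_1$ and, by Proposition \ref{prop:comparison}, at most $(1 + 2r(m-r)) n \theta$. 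Plugging the quoted complexity estimates of \cite{SaSc18} and \cite{Schost03} (which are of the form $\widetilde{O}(N^{O(1)} \deg^{O(1)})$) and keeping track of the exponents, one gets a per-$(r,\iota)$ cost of the shape $\widetilde{O}(N^4 (r(m-r))\, n\, \theta^2)$ after absorbing lower-order factors and the parametric overhead into the $\widetilde{O}$.

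Next I would account for the remaining subroutines. The routine {\sf CUT} (step \ref{step11}, via \cite{rrs}) is applied once per value of $r$ to the one-dimensional parametrization $Q_r$; its cost is polynomial in $N$ and in the degree of $Q_r$, which is bounded by the sum over $\iota$ of the degrees of the $\zar{\mathcal{C}_{r,\iota}}$, hence by $\binom{m}{r}(1+2r(m-r))n\theta$, so {\sf CUT} does not dominate the total. Similarly, each {\sf UNION} (step \ref{step10}, via \cite[Lem.G.3]{din2013nearly}) has cost polynomial in $N$ and in the degrees of its two inputs, so the $\binom{m}{r}$ successive unions accumulating $Q_r$ stay within the same bound. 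Therefore the dominant contribution per value of $r$ is $\binom{m}{r}$ times the {\sf ODP} cost, giving $\widetilde{O}\!\left(\binom{m}{r} r(m-r) N^4 n\, \theta^2\right)$; summing over $r = 1,\ldots,m-1$ yields the claimed bound $\widetilde{O}\!\left( n \sum_r \binom{m}{r} r(m-r) N^4 \theta^2 \right)$.

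Finally I would justify the stated bound $\theta \le \binom{m^2+n}{n}^3$ on the regular-case degree $\theta = \theta(m,n,r)$ of \cite[Prop.5.1]{HNS2015c}: this follows from the multilinear B\'ezout bound used there, since each of the (at most $m^2+n$) polynomials in the relevant system has total degree bounded appropriately, and a crude estimate of the multihomogeneous B\'ezout number by a product of at most three binomial coefficients each at most $\binom{m^2+n}{n}$ gives the cube. The main obstacle I expect is the bookkeeping in the first paragraph: one must carefully extract from \cite{SaSc18} and \cite{Schost03} the precise exponents in $N$ and in the curve degree, verify that the parametric resolution (with the single parameter $\varepsilon$) does not introduce extra polynomial factors beyond what the $N^4\theta^2$ term allows, and confirm that the substitution $N = c + n + r(m-r)$ together with $\deg \zar{\mathcal{C}_{r,\iota}} = O(r(m-r)\, n\, \theta)$ collapses cleanly into the stated expression after the $\widetilde{O}$ absorbs logarithmic and constant factors. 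Everything else is a routine summation of polynomially bounded quantities.
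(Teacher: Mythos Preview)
Your proposal is correct and follows essentially the same approach as the paper's proof: both decompose the cost per $(r,\iota)$ into the two {\sf ODP} steps (zero-dimensional solve via \cite{SaSc18}, then parametric lift via \cite{Schost03}), bound each step using the curve-degree estimates of Propositions~\ref{prop:deg:curve} and~\ref{prop:comparison}, observe that {\sf CUT} and {\sf UNION} are dominated by the lifting step, and sum over $r$ with weight $\binom{m}{r}$. The paper is slightly more explicit about the intermediate quantity $\theta' = \deg\zeroset{H}$ arising from the symbolic homotopy in \cite{SaSc18} (yielding the $\widetilde{O}(N^3\theta\theta')$ shape before substituting $\theta' \in O(N\min\{n,c\}\theta)$), and derives the bound $\theta \le \binom{m^2+n}{n}^3$ directly from $\theta \le \binom{c+n}{n}^3$ in \cite[Prop.5.1]{HNS2015c} together with $c \le m(m-r) \le m^2$, but these are exactly the bookkeeping details you flagged as the remaining obstacle.
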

\begin{proof}
  Let $\overline{\varepsilon} \in (0,1)$ be generic, and let $\overline{L}$ be equal to the system $L$ where $\varepsilon$ is instantiated to
  $\overline{\varepsilon}$. Let $\theta$ be the value computed in \cite[Prop.5.1]{HNS2015c}, that bound the number of solution of $\overline{L}=0$
  in $\CC^N$. By the same proposition one gets
  $$
  \theta \leq \binom{c+n}{n}^3 \leq \binom{m(m-r)+n}{n}^3,
  $$
  from which the claimed bound uniform in $r$.

  Let $\overline{L}'=(\overline{L}'_1,\ldots,\overline{L}'_N)$ be a polynomial vector of lenght $N$ such that $\overline{L}'_i$ has
  the same multilinear structure as $\overline{L}_i$, for $i=1,\ldots,N$, and we denote by $H(T,t_1,\ldots,t_N) = T \overline{L} + (1-T)\overline{L}'$.
  By \cite[Prop.5]{SaSc18}, the complexity of computing a univariate representation of $\zeroset{\overline{L}}$ is in $\widetilde{O}(N^3 \theta\theta')$
  where $\theta' = \deg,\zeroset{H}$. By \cite[Lem.5.4]{HNS2015c}, $\theta' \in O(N\min\{n,c\}\theta)$. Hence the complexity of the first
  step of {\sf ODP} is in
  $$
  \widetilde{O}(\min\{n,c\} N^4 \theta^2).
  $$
  Next, let $\pi : \allowbreak\CC^{N+1} \rightarrow \CC$ be the projection $(\varepsilon,t_1,\ldots,t_N)\mapsto\varepsilon$. By \cite[Prop.5.1]{HNS2015c},
  a generic fiber of $\pi$ has degree bounded by $\theta$. Proposition \ref{prop:comparison} implies that $\deg\,\zar{\mathcal{C}_{r,\iota}}$ is bounded
  above by $\left(1+2r(m-r)\right) n \theta$. We apply the bound in \cite[Cor.1]{Schost03}, and we get a complexity in
  $$
  \widetilde{O}\left( nr(m-r) N^4 \theta^2 \right),
  $$
  for the parametric resolution step in {\sf ODP}. By \cite[Lem.13]{SaSc18}, the complexity of {\sc CUT} is in $\widetilde{O}\left( N^3 \theta\theta' \right)$,
  hence in
  $$
  \widetilde{O}\left( \min\{n,c\} N^4 \theta^2 \right).
  $$
  The complexity of {\sc UNION} is in $\widetilde{O}(N \theta^2)$ at each step, by \cite[Lem.G.3]{din2013nearly}. This shows that the most expensive step
  is the lifting step.

  The previous complexity bounds depend on $r$, and hold for all $r=1,\ldots,m$, and for all index subsets $\iota \subset \{1,\ldots,m\}$.
  We conclude by summing up with weight $\binom{m}{r}$, the number of subsets $\iota\subset\{1,\ldots,m\}$ of cardinality $m-r$.
\end{proof}

We note that $N$ can be bounded above by $n+2m^2$ uniformly in $r$. The complexity of \textsc{DegenerateSDP} given by Theorem \ref{theo:complexity} is
polynomial in $n$ when $m$ is fixed. Moreover, for a generic perturbation matrix $B$, \cite[Lem.3.1]{HNS2015c} allows to deduce the inequality
$n \geq \binom{m-r+1}{2}$: this implies that when $n$ is fixed, then $m$ is bounded above and hence the complexity is again a polynomial function of the input size.




\section{Examples}
\label{sec:examples}
In this final section we discuss degenerate examples, showing how our algorithm works in practice.

\begin{example}
Consider the $2 \times 2$ semidefinite representation of a point $(p_1,p_2) \in \RR^2$:
$$
\left\{ (x_1,x_2) \in \RR^2 \mymid A(x)\coloneqq \left(
\begin{array}{cc} p_1-x_1 & x_2-p_2 \\ x_2-p_2 & x_1-p_1 \end{array} \right) \succeq 0 \right\}
= \left\{ (p_1,p_2) \right\}.
$$
The interior of $\spec(A) \coloneqq \left\{ (p_1,p_2) \right\}$ in $\RR^2$ is empty, and moreover
$\spec(A)$, corresponding to the intersection of the $2-$dimensional linear space of matrices
in the pencil $A(x)$ with the $3-$dimensional cone of $2 \times 2$ symmetric matrices, has
co-dimension $2$ in $\RR^2$.

We first construct the incidence varieties $\calV_{r,\iota}(A)$. For $r=0$, the incidence variety
is smooth, but for $r=1$ and $\iota = \{1\}$, this is the following algebraic curve in $\CC^3$
$$
\calV_{1,\{1\}} = \zeroset{(x_2-p_2)y+p_1-x_1,(x_1-p_1)y+x_2-p_2}
$$
having two complex singularities lifting $(p_1,p_2)$, precisely at $(p_1,p_2,\allowbreak\pm {\mathfrak i})$, with ${\mathfrak i}^2=-1$.

According to Proposition \ref{prop:smoothness}, we can desingularize the varieties
$\calV_{r,\iota}(A)$ by applying a sufficiently generic homotopy
$$
A + \varepsilon B =
\left(\begin{array}{cc} p_1-x_1 & x_2-p_2 \\ x_2-p_2 & x_1-p_1 \end{array} \right)
+ \varepsilon \left(\begin{array}{cc} b_{11} & b_{12} \\ b_{12} & b_{22} \end{array} \right)
$$
perturbing the constant term of $A$. The set $\calV_{r,\iota}(A+ \varepsilon B)$ is smooth and
equidimensional for generic $B$, and the expected number of critical points of the restriction
of a generic linear function $\ell(x_1,x_2) = \ell_1x_1+\ell_2x_2$ is finite for each
$\varepsilon$.
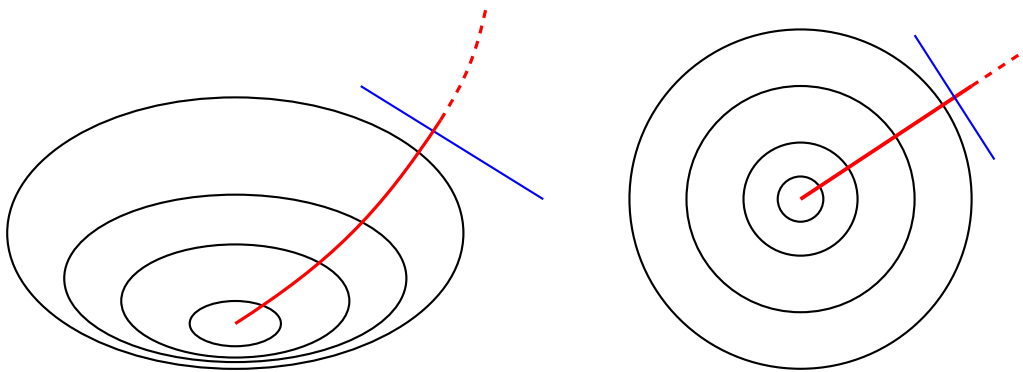
\begin{figure}[!ht]
\begin{center}
  \begin{tikzpicture}[scale=2.5]
  \tikzset{mypoints/.style={fill=white,draw=black,thick}}
  \def\ptsize{0pt}
  \draw[name path=ellipse,black,thick]
  (0,0) circle[x radius = 0.2 cm, y radius = 0.1 cm];
  \draw[name path=ellipse,black,thick]
  (0,0.1) circle[x radius = 0.5 cm, y radius = 0.25 cm];
  \draw[name path=ellipse,black,thick]
  (0,0.2) circle[x radius = 0.75 cm, y radius = 0.37 cm];
  \draw[name path=ellipse,black,thick]
  (0,0.4) circle[x radius = 1.0 cm, y radius = 0.6 cm];
  \draw[blue, thick] (1.35,0.55) -- (0.55,1.05);
  \draw[color=red, very thick] (0,0) to [bend right=12] (0.9,0.9);
  \draw[color=red, very thick, dashed] (0.9,0.9) to [bend right=10] (1.1,1.4);
\end{tikzpicture}
\qquad
\begin{tikzpicture}
\draw[thick] (0,0) circle (1.5);
\draw[thick] (0,0) circle (1.0);
\draw[thick] (0,0) circle (0.5);
\draw[thick] (0,0) circle (0.2);
\draw[line width=0.5mm, red] (0,0) -- (1.5,1);
\draw[red, dashed, very thick] (1.5,1) -- (2,4/3);
\draw[blue, thick] (1.7,0.35) -- (1,1.45);
\draw[decoration={text along path,reverse path,text align={align=center},text={}},decorate] (1.6,0) arc (0:180:1.6);
\draw[decoration={text along path,reverse path,text align={align=center},text={}},decorate] (1.05,0) arc (0:180:1.0);
\draw[decoration={text along path,reverse path,text align={align=center},text={}},decorate] (0.55,0) arc (0:180:0.6);
\end{tikzpicture}
\end{center}
\caption{Homotopy curves in red and linear objective function in blue, for
  generic $B$ (left) and for $B=\Id_2$ (right)}
\label{fig:1}
\end{figure}

In Figure \ref{fig:1} we plot the semialgebraic curve of solutions to the perturbed systems
for a fixed linear objective function.
Eliminating the variables $y$ and $z$ from the Lagrange
system $\text{Lag}_{r,\iota}(A+\varepsilon B)$ yields a one-dimensional complex curve
representing the Zariski closure of the red curves in Figure \ref{fig:1}.

For the special choice $B=\Id_2$, the real trace of the homotopy curve is the line
orthogonal to $\ell$, that is parallel to the zero set of $\ell^\perp(x_1,x_2) =
\ell_2x_1-\ell_1x_2$ and passing through $(p_1,p_2)$, whereas if $B$ is drawn randomly
the homotopy curve has degree $2$. For instance, for $(p_1,p_2)=(1,1)$, the homotopy
curve constructed by {\sc DegenerateSDP} is given by the equality
\begin{equation*}
\begin{aligned}
  & 2241769\,x_1^2 + 115046296\,x_1x_2 + 65669911\,x_2^2 - \\
  & - 119529834\,x_1 - 246386118\,x_2 + 182957976 = 0 \\
\end{aligned}
\end{equation*}
where $\ell(x_1,x_2) = 88x_1-94x_2$ is the objective function, and with
perturbation matrix
$$
B=\left(\begin{array}{cc} 80 & -68 \\ -68 & 109 \end{array} \right).
$$
We finally remark that, even if the choice $B = \Id_2$ exhibits a degenerate behaviour
in the sense described above, from the point of view of the homotopy constructed in
this work $B = \Id_2$ exhibits a generic behaviour: one can check by hand that the
incidence variety $\calV_{r,\iota}(A+ \varepsilon \Id_2)$ is singular if and only if
$\varepsilon=0$. Indeed, $\calV_{r,\iota}(A+ \varepsilon \Id_2)$ is defined by the
vanishing of $f^{(\varepsilon)}=(\varepsilon- x_1+x_2y,x_2+\varepsilon y+x_1y)$, and the
$2 \times 2$ minors of $Df^{(\varepsilon)}$ combined with $f^{(\varepsilon)}=0$ imply that
$y=\pm {\mathfrak i}$ and $0=x_2=\varepsilon-x_1=\varepsilon+x_1$ hence $x_1=x_2=
\varepsilon=0$.
\end{example}

\simone{
\begin{example}
  We consider the degenerate primal-dual semidefinite program in \cite[Ex.2.3.4]{drusvyatskiy2017many}.
  The original (dual) feasible set is the solution of the LMI
  $$
  A =
  \left(\begin{array}{ccc} 0 & 0 & 0 \\ 0 & 1 & 0 \\ 0 & 0 & 0 \end{array} \right) -
  x_1\left(\begin{array}{ccc} 0 & 0 & 0 \\ 0 & 0 & 0 \\ 0 & 0 & 1 \end{array} \right)-
  x_2\left(\begin{array}{ccc} 0 & 0 & 1 \\ 0 & 1 & 0 \\ 1 & 0 & 0 \end{array} \right)
  \succeq 0
  $$
  This is feasible but weakly (it defines a half line in $\RR^2$). As explained
  in \cite[Ex.2.3.4]{drusvyatskiy2017many}, this is a degenerate example since the minimum value
  of the linear objective function on the perturbed dual feasible sets $\spec(A + \varepsilon B)$
  might converge to the primal optimal value, rather than to the dual optimal value as expected.
  Indeed, the {\it duality gap} is positive for some choice of the objective function.

  This happens for instance for the linear objective function $\ell(x_1,x_2)=x_2$, that is in the SDP
  \begin{equation}
    \label{SDP_ex2}
    \begin{array}{cc}
     \text{ inf } & x_2 \\
     \text{ s.t. } & A(x_1,x_2) \succeq 0.
    \end{array}
  \end{equation}
  We apply our algorithm to problem \eqref{SDP_ex2}. We obtain that for a generic perturbation matrix $B$, the set of minimizers of the restriction of $x_2$ to $\spec(A+\eps B)$ lies on the complex curve of degree three given by the equation
  $$
  37467796356 \, x_1 \, x_2^2 - 101435939508 \, x_1 \, x_2 + 68653956361 \, x_1 = 0.
  $$
  Its real trace is the line $x_1=0$ : this is the real Zariski closure of the half
  line $\{(x_1,x_2) \in \RR^2 \mymid x_1=0, x_2 \geq 0\}$ containing all the minimizers
  $(0,\varepsilon)$ of the perturbed systems, and in particular the minimizer $(x_1,x_2) = (0,0)$
  of the SDP in \eqref{SDP_ex2}.
\end{example}
}


\section*{Acknowledgments}
Mohab Safey El Din is supported by the ANR grant ANR-17-CE40-0009 {\sc Galop} and the PGMO grant {\sc Gamma}. Simone Naldi acknowledges partial support of the Fondation Math\'ematique Jacques Hadamard throught the PGMO grant 2018-0061H.

\begin{center}
\rule{12cm}{0.7pt}
\end{center}

\bibliographystyle{ACM-Reference-Format}


\def\cfac#1{\ifmmode\setbox7\hbox{$\accent"5E#1$}\else
  \setbox7\hbox{\accent"5E#1}\penalty 10000\relax\fi\raise 1\ht7
  \hbox{\lower1.15ex\hbox to 1\wd7{\hss\accent"13\hss}}\penalty 10000
  \hskip-1\wd7\penalty 10000\box7}








\end{document}